\newtheorem{theorem}{Theorem}
\DeclareMathOperator*{\argmin}{argmin}
\newtheorem{lemma}[theorem]{Lemma}
\newcommand{\ind}{\ensuremath{\mathbbm{1}}}
\newcommand{\bm}{\ensuremath{\mathbf{m}}}
\newcommand{\bR}{\ensuremath{\mathbf{R}}}
\newcommand{\bbf}{\ensuremath{\mathbf{f}}}
\newcommand{\bh}{\ensuremath{\mathbf{h}}}
\newcommand{\bg}{\ensuremath{\mathbf{g}}}
\newcommand{\bW}{\ensuremath{\mathbf{W}}}
\newcommand{\bU}{\ensuremath{\mathbf{U}}}
\newcommand{\bV}{\ensuremath{\mathbf{V}}}
\newcommand{\bz}{\ensuremath{\mathbf{z}}}
\newcommand{\bH}{\ensuremath{\mathbf{H}}}
\newcommand{\bE}{\ensuremath{\mathbf{E}}}
\newcommand{\bI}{\ensuremath{\mathbf{I}}}
\newcommand{\bA}{\ensuremath{\mathbf{A}}}
\newcommand{\bv}{\ensuremath{\mathbf{v}}}
\newcommand{\by}{\ensuremath{\mathbf{y}}}
\newcommand{\bZ}{\ensuremath{\mathbf{Z}}}
\newcommand{\bP}{\ensuremath{\mathbf{P}}}
\newcommand{\btheta}{\ensuremath{\boldsymbol{\theta}}}
\newcommand{\bbR}{\ensuremath{\mathbb{R}}}
\title{Simultaneous Learning of the Inputs and  Parameters in\\ Neural Collaborative Filtering}
\author[ \hspace{-1ex}]{Ramin Raziperchikolaei}
\author[ \hspace{-1ex}]{Young-joo Chung}
\affil[ ]{Rakuten Group, Inc.}
\affil[ ]{\textit {\{ramin.raziperchikola,youngjoo.chung\}@rakuten.com}}
\date{} 
\begin{document}

\maketitle
\begin{abstract}
Neural network-based collaborative filtering systems focus on designing network architectures to learn better representations while fixing the input to the user/item interaction vectors and/or ID. In this paper, we first show that the non-zero elements of the inputs are learnable parameters that determine the weights in combining the user/item embeddings and fixing them limits the power of the models in learning the representations. Then, we propose to learn the value of the non-zero elements of the inputs jointly with the neural network parameters. We analyze the model complexity and the empirical risk of our approach and prove that learning the input leads to a better generalization bound. Our experiments on several real-world datasets show that our method outperforms the state-of-the-art methods, even using shallow network structures with a smaller number of layers and parameters.
\end{abstract}

%A little bit shorter abstract
%\begin{abstract}
%Neural network-based collaborative filtering systems focuse on designing network architectures to learn better representations, while fixing the input to the user/item interaction vectors and/or ID.  In this paper, we argue that fixing the input limits the power of the model in learning the representations. We show that there is an implicit embedding matrix in the first fully connected layer of the networks, which takes the user/item interaction vector as the input and maps it to the user/item representations. The role of the non-zero elements of the input vectors is to choose and combine a subset of the embedding vectors. We argue that these values are learnable parameters, which determine the weights in combining the embeddings. To achieve better representations, instead of only relying on the power of neural network architecture, we propose to learn the value of the non-zero elements of the inputs jointly with the neural network parameters. We analyze the model complexity and the empirical risk of our approach and prove that learning the input leads to a better generalization bound. Our experiments on several public and real-world datasets show that our method outperforms the state-of-the-art methods, even using shallow network structures with a smaller number of layers and parameters.
%\end{abstract}

\section{Introduction}
\label{s:intro}
With the explosive growth of online information in the past few decades, recommender systems (RSs) have become imperative to overcome the information overload problem \citep{Ricci11}. The goal of RSs is to predict user feedback on items, which can be helpful in providing a list of suggested items to a user. The feedback (also called interaction) can be implicit (e.g., purchased) or explicit (e.g., a rating between $1$ to $5$). Our approach can be applied to both types of feedback. The "interaction" throughout this paper refers to both types.

Collaborative filtering (CF), one of the most widely used and studied approaches in RSs, learns users' preferences from the patterns in the past feedback between the users and items \citep{Ricci11}. Matrix factorization (MF), the most popular approach in CF, models the interaction between the users and items by the dot product of their representations \citep{Koren09}. 

Neural network-based collaborative filtering methods have recently become popular to extend the idea of the MF and achieve better predictions. These methods involve neural networks in learning the user/item representations and/or modeling the interactions \citep{He18}. The main neural network architectures used in these models are autoencoders \citep{Sedhain15}, multilayer perceptrons (MLPs) \citep{Xue17,He17,Dong19}, and convolutional neural networks (CNNs) \citep{He18}.
% too much explanation of ID and interaction vectors?

In this paper, we focus on another key--but lesser studied--element of neural network-based methods: \emph{the input layer and its role in learning better representations.} Two types of inputs are used in neural network-based RSs. The first one is the user/item ID, where each ID is converted to a one-hot-encoding vector and gets connected to an embedding matrix \citep{He17,He18}. The second one is the user-item interaction vector, where the user interaction vector contains user feedback on all items and the item interaction vector contains all feedback given to an item \citep{Sedhain15,Xue17,Dong19}. Even the hybrid RS models, which utilize side information to overcome the issue of interaction sparsity, still keep the user/item ID and/or interaction vector to get competitive results \citep{Strub16,Zhang17,Dong17}. 

Table 2 of \citet{Dong19} shows that DeepCF (with the interaction vector as the input) outperforms NeuMF (with the ID as the input) in four different datasets. Table 2 of  \citet{Xue17} and our experiments confirm these results. In Section~\ref{s:ncf}, we give the details of the NCF with the interaction vector and ID as the input and explain why one works better than the other.

It is known that deep neural networks can be applied to images and textual data, which are spatially and/or temporally correlated, to achieve state-of-the-art results. As pointed out by \citet{Lian18}, the input data in RSs is different from images and textual data.  First, there is no temporal or spatial correlation in the interaction vectors. Second, the interaction vectors are massively sparse (between $95\%$ to $99\%$) and have vast dimensions (between a few thousand to a few million). What kind of information does the neural network extract from the interaction vectors to learn user and item representations then?

To answer this question, we utilize Neural Collaborative Filtering (NCF) framework proposed by \citet{He17} to understand and analyze the role of interaction vectors. We selected NFC because it is a basic two-branch architecture for neural collaborative filtering. In most NCF papers, the input layer is connected to a set of fully connected layers \citep{Xue17,He17, Dong19}. We show that the weight matrix of the first layer can be considered as an implicit user/item embedding matrix, and the non-zero elements of the interaction vector determine which embedding vectors to choose.  We argue that the non-zero elements of the interaction vectors are \emph{learnable parameters} that determine the weights in combining the embeddings. Fixing these input parameters limits the power of the models in learning the representations. To achieve better representations, we propose to learn the input and neural network parameters jointly. We theoretically analyze our approach and prove that it achieves a better empirical error and a lower generalization bound than  previous works with the fixed input.

We conduct extensive experiments on several real-world datasets. The experiments show that our approach has a better prediction performance than state-of-the-art methods, even using shallow network structures with a smaller number of parameters.

\section{Related work}
\label{s:related}
Matrix factorization (MF) is the most popular collaborative filtering approach. MF uses the dot product of the user and item representations to estimate their interaction \cite{Koren08,Takacs08,Koren09}. To improve performance, neural networks have been involved in learning representations and modeling interactions \cite{Cheng16,  He17a, Guo17,He17,Dong19,Xue17,He18}. NeuMF \cite{He17} takes the user and item IDs as the inputs and uses embedding layers to create representations. Two joint user-item representations are then learned, one by the element-wise product of the user and item representations and the other one by applying an MLP to their concatenation. Finally, another MLP takes the two joint representations and predicts the final rating. DeepCF \cite{Dong19} makes two changes to the NeuMF to learn the representations. First, DeepCF uses user/item interaction vectors as the input, instead of IDs. Second, before applying the element-wise product, it uses several MLPs on top of the user and item representations. DMF \cite{Xue17} uses explicit ratings as the input to a set of MLPs to learn the representations. The user-item interactions are modeled with the dot product of the representations. ConvNCF \cite{He18} learns user and item representations by connecting user/item IDs to the embedding layers and applies convolutional neural networks to their outer product to predict the ratings. Autorec is an autoencoder-based CF method that reconstructs the known values of the interaction vectors to predict unknown ones \cite{Sedhain15}. 

Autoencoders have been used extensively in hybrid collaborative filtering to utilize various sources of side information for the users and items, such as users' age, users' occupation, and items' title.\ \cite{Li15,Wang15b,Strub16,Dong17,Zhang17b,Li18,Steck20}. These methods try to reconstruct the input vectors (fixed to the interaction vector and side information) to learn representations for the prediction task.

The focus of these previous works is on designing complicated network structures to learn high-quality representations. A recent work \cite{Chen20} uses a different approach and focuses on the quality of the output ratings in collaborative filtering. It simultaneously learns the model parameters and edits the overly personalized ratings using data debugging techniques. 

All the previous works fix the input to either the interaction vector or the IDs.  On the contrary, we focus on the input layer and its role in learning better representations.

\paragraph{Notations.}
We denote the sparse interaction matrix by $\bR \in \mathbb{R}^{m \times n}$, where $m$ and $n$ are the number of users and items, respectively, $R_{jk}>0$ is the interaction value of the user $j$ on the item $k$, and $R_{jk}=0$ means the interaction is unknown.  The goal is to predict the unknown interactions in $\bR$. The $i$th row of a matrix $\bH$ is shown by $\bH_{i,:}$ and the $j$th column is shown by $\bH_{:,j}$.

\section{Neural collaborative filtering}
\label{s:ncf}
Since our proposed method uses the neural collaborative filtering (NCF) framework, we review NCF in this section. In NCF framework, as depicted in Fig.~\ref{f:inputs}(a), the interaction of the users on items is predicted by an MLP, which takes the user and item representations as the input. 

Let us denote the $j$th user representation by $\smash{\bz^{u}_j \in \bbR^{d_u}}$ and the $k$th item representation by $\smash{\bz^{i}_k \in \bbR^{d_i}}$. These two representations are concatenated to give the user-item representation, which is denoted by $\smash{\bz_{jk}=[\bz^{u}_j,\bz^{i}_k] \in \bbR^{d_u+d_i}}$. Then, an MLP, denoted by  $f()$ , takes $\bz_{jk}$  and predicts the interaction. The objective function is defined as:
\begin{equation}
	\label{e:ncf}
E_{\text{NCF}}(\bz_{jk},\btheta^{ui})= \frac{1}{mn} \sum_{j=1}^{m}\sum_{k=1}^n  L(R_{jk},f(\bz_{jk};\btheta^{ui}))
\end{equation}
where $\btheta^{ui}$ contains all the weight matrices of the MLP $f()$. Minimization of the loss function $L()$ should lead to the match between the observed and predicted interactions. Both regression \cite{He17} and classification (binary cross-entropy) \cite{Dong19} losses have been used in the literature, which are defined as follows:
\begin{equation}
	L_{\text{reg}} =  ||R_{jk} - \hat{R}_{jk}||^2, \qquad \qquad L_{\text{bce}} =  - (R_{jk} \log\hat{R}_{jk} + (1-R_{jk}) \log(1-\hat{R}_{jk}) ),
\end{equation}
where  $\hat{R}_{jk} = f(\bz_{jk};\btheta^{ui})$ is the predicted interaction.

The focus of our paper is on learning better user and item representations, which play an important role in the final performance of the recommendation systems. In the rest of this section, we explain how these representations are learned with the ID and interaction vector as the input.

\begin{figure}[!t]
	\begin{center}
	\begin{tabular}{c@{\hspace{5ex}}c@{\hspace{5ex}}c}	\includegraphics[height=0.35\linewidth]{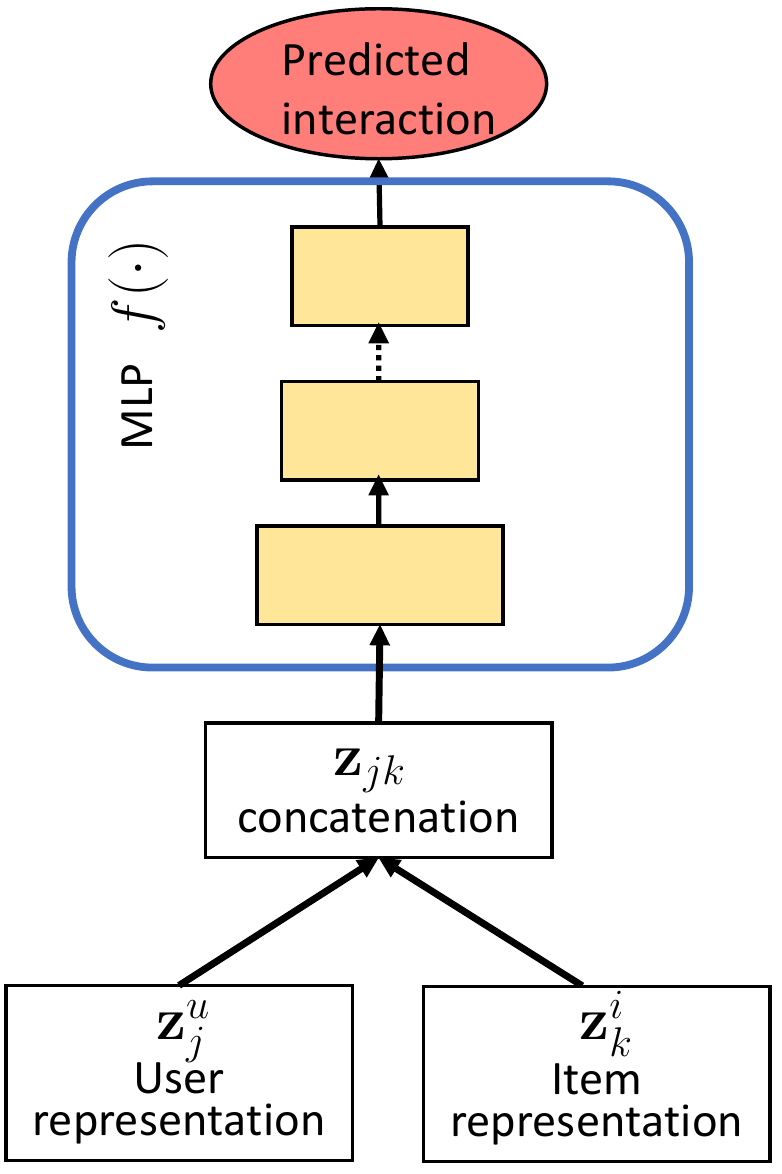} & 						     \includegraphics[height=0.35\linewidth]{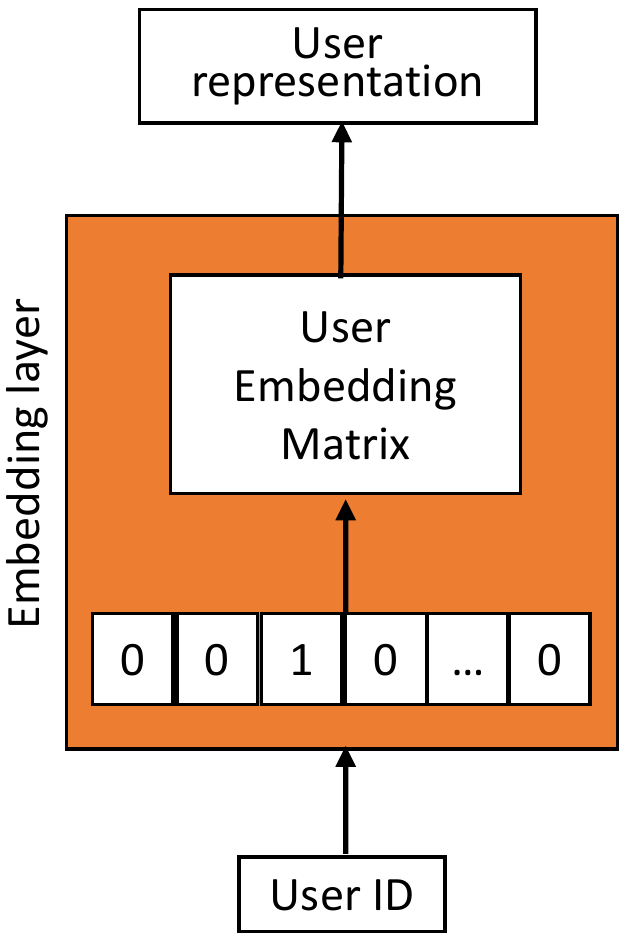} &    \includegraphics[height=0.35\linewidth]{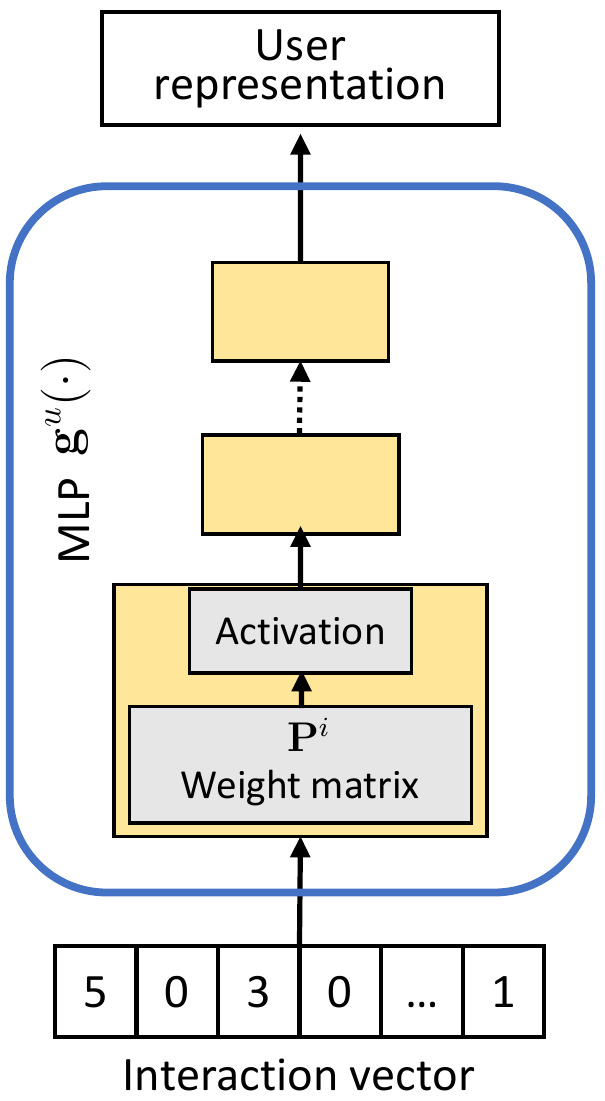} \\
			(a) & (b) & (c) \\
			\end{tabular}
			\caption{ (a) NCF applies MLP to the concatenation of the user and item representations to predict the interaction. (b) Representation learning with the ID as the input. (c) Representation learning with the interaction vector as the input.}
			\label{f:inputs}
	\end{center}
\end{figure}

\subsection{NCF with the ID as the input}
Figure~\ref{f:inputs}(b) shows how IDs are mapped to the low-dimensional representations. First, the user ID is converted to a one-hot-encoding vector of size $m$ (number of users). For user $j$, this vector is denoted by $\bI^{u}_j \in \bbR^{m}$, where the $j$th entry of $\bI^{u}_j$ is $1$ and the rest are $0$. Then, this vector is multiplied by a user embedding matrix $\smash{\bE^{u} \in \bbR^{m\times d_u}}$ to give us the $d_u$ dimensional representation vector, i.e., $\smash{\bz^{u}_j = \bI^{u}_j\bE^{u}}$. We can write the $k$th item representation in the same way as $\smash{\bz^{i}_k = \bI^{i}_k\bE^{i}}$, where $\bI^{i}_k \in \bbR^{1\times n}$, and $\smash{\bE^{i} \in \bbR^{n\times d_i}}$ is the item embedding matrix. The objective function of the NCF with the ID as the input is defined as:
\begin{equation}
	\label{e:id}
	\begin{split}
&E_{\text{NCF}}(\bE^{u},\bE^{i},\btheta^{ui})= E_{\text{NCF}}(\bz_{jk},\btheta^{ui})= \frac{1}{mn} \sum_{j=1}^{m}\sum_{k=1}^n  L(R_{jk},f(\bz_{jk};\btheta^{ui})) \\
	&\text{s.t.} \quad  \bz_{jk} = [\bz^{u}_j,\bz^{i}_k], \quad \bz^{u}_j = \bI^{u}_j\bE^{u}, \quad \bz^{i}_k = \bI^{i}_k\bE^{i}.
	 \end{split}
\end{equation}
\subsection{NCF with the interaction vector as the input}
Fig.~\ref{f:inputs}(c) shows how the interaction vector as the input works. First, for the $j$th user, the $j$th row of the interaction matrix $\bR$ is extracted. An MLP, denoted by $\bg^{u}()$, takes this vector and generates the user representation:
\begin{equation*}
\bz^{u}_j = \bg^{u}(\bR_{j,:};\btheta^{u}_{L}) = \sigma(\dots\sigma(\sigma(\bR_{j,:}\bP^{u})\bW^{u}_2)\dots \bW^{u}_L),
\end{equation*}
where $L$ is the number of layers, $\sigma()$ is an activation function, $\smash{\btheta^{u}_{L}=\{\bP^{u},\bW^{u}_2,\dots,\bW^{u}_L \}}$ contains the weights of all layers, and $\bP^{u}$ and $\bW^{u}_l$ are the weight matrices of the first and $l$th layers ($l>1$), respectively. We can write the $k$th item representation in the same way as $\smash{\bz^{i}_k = \bg^{i}(\bR_{:,k};\btheta^{i}_{L})}$.  Note that for simplicity, we assume both $\bg^u()$ and $\bg^i()$ have $L$ layers and we ignore the biases. We can write the objective function as:
\begin{equation}
	\label{e:R}
	\begin{split}
&E_{\text{NCF}}(\btheta^{u}_L,\btheta^{i}_L,\btheta^{ui})= E_{\text{NCF}}(\bz_{jk},\btheta^{ui}) = \frac{1}{mn} \sum_{j=1}^{m}\sum_{k=1}^n  L(R_{jk},f(\bz_{jk};\btheta^{ui})) \\
	&\text{s.t.} \quad  \bz_{jk} = [\bz^{u}_j,\bz^{i}_k], \quad \bz^{u}_j = \bg^{u}(\bR_{j,:};\btheta^{u}_L), \quad \bz^{i}_k = \bg^{i}(\bR_{:,k};\btheta^{i}_L)
	 \end{split}
\end{equation}
where $\btheta=[\btheta^{u}_L,\btheta^{i}_L,\btheta^{ui}]$ contains all parameters of the $\bg^u(),\bg^i()$, and $f()$ MLPs, respectively. 

\subsection{ID versus interaction vector as the input}
Table 2 of \citet{Dong19} and Table 2 of \citet{Xue17} show that DeepCF (interaction vector as the input) outperforms NeuMF (ID as the input) in four different datasets. Our experiments also confirm these results. We also found that DeepCF converges faster than NeuMF. To understand the reason of performance difference between ID and interaction vectors, we examine the toy interaction matrix of Fig.~\ref{f:toy}. Here, there are three users and four items, $1$ means implicit feedback and $0$ means unknown feedback.

The main idea of collaborative filtering is to use the patterns in the past interactions of the users and items to predict future interactions. In Fig.~\ref{f:toy}, we can see that the past interactions of user\#1 and user\#2 are similar, as they have both interacted with item\#1 and item\#2. However, there is no common item in the past interactions of the user\#2 and user\#3. In CF, we expect user\#1 and user\#2 to follow the same pattern, but not much can be said about the similarity of user\#2 and user\#3. Let us see how this information appears in the input spaces created by the IDs and interaction vectors.

In the case of ID as the input, each user ID is converted to a one-hot-encoding of length four. So the Euclidean distance between each pair of users is $\sqrt{2}$. In other words, all users are at the same distance from each other. 

In the case of the interaction vector as the input, each row of the interaction matrix of Fig.~\ref{f:toy} corresponds to the input vector of each user. We can see that the Euclidean distance between the input vectors of the users \#1 and \#2 (similar past interactions) is $1$, while the distance between the input vectors of users \#2 and \#3 (dissimilar or unrelated past interactions) is $\sqrt{3}$. This example shows that the similarity is preserved (to some extent) in the input space created by the interaction vectors, which is the reason that NCF with interaction vector as input outperforms NCF with ID as input.

In the rest of this script, we focus on NCF with the interaction vectors as the input and how we can improve its performance.

\begin{figure}
	\begin{center}
		\begin{tabular}{ccccc}
		\toprule
		& item\#1 & item\#2  & item\#3 & item\#4 \\
		\midrule
		user\#1 &1 & 1 & 1 & 0\\
		user\#2 & 1 & 1 & 0 & 0\\
		user\#3 & 0 & 0 & 0 & 1
		\end{tabular}	
		\caption{Toy example.  Interaction matrix of three users and four items, where $1$ is observed implicit feedback and $0$ is unknown feedback.}
		\label{f:toy}
		\end{center}
\end{figure}
\section{Proposed method: learning the inputs in neural\\ collaborative filtering}
\label{s:prop}

We explain our proposed approach using the NCF architecture introduced in Section \ref{s:ncf}. This makes our equations simpler and enable us to focus on our main idea of learning the inputs better. In the supplementary material in Section~\ref{s:sup}, we change the architecture to the more complicated NCF architectures of CFNet \cite{Dong19} and show that our arguments and theoretical analysis remain valid. Our experiments confirm that our approach improves the results of the previous works, regardless of the specific type of NCF architecture.

We motivate the idea of learning the inputs by explaining what the MLPs in Eq.~\eqref{e:R} learn from the interaction vectors. While there are several nonlinear layers in $\bg^{u}()$ and $\bg^{i}()$, we focus on understanding the output of the first layer, which gives us a good intuition about how the interaction vectors are used. 

Consider the first fully connected layer of the $\bg^{u}()$, with the weight matrix $\bP^{u}$. Assuming linear activation function, the output of this layer can be written as $\by^{u}_j=\bR_{j,:}\bP^{u}=\sum_{l=1}^{n}\bR_{j,l}\bP^{u}_{l,:} $. Note that $\bP^{u} \in \bbR^{n\times d_u}$ contains a $d_u$ dimensional vector for each of the items. In other words, we can consider $\bP^{u}$ as an (implicit) item embedding matrix. \emph{The user (initial) representation $\by^{u}_j$ is the weighted sum of the embedding vectors of the items which are interacted by the $j$th user.} Similarly for the items, the weights of the first layer of $\bg^{i}()$, denoted by $\bP^{i} \in \bbR^{m \times d_i}$, can be considered as an implicit user embedding layer, and \emph{the item  (initial) representation $\by^{i}_k$ is the weighted sum of the embedding vectors of the users who interacted with the $k$th item.} In previous works, the weights in the "weighted sum of embedding vectors" are fixed to the interaction values (ratings or ones) of the users on the items. We will show that fixing these weights limits the power of the representations and we should consider them as the learnable parameters.

Previous works rely on the nonlinear activation functions, making networks deep, and creating multiple user and item representations (from different network architectures) to make powerful representations.  On the contrary, our approach learns the \textbf{In}put and \textbf{P}arameters of \textbf{NCF} simultaneously, and is denoted by \textbf{InP-NCF} in the rest of the paper. 

Next, we define our objective function formally. For the $j$th user, we define the set $N_j=\{l|R_{jl} \ne 0, 1 \le l \le n\}$, which contains the index of all the items interacted by the $j$th user. We define the set of user input vectors as $\bU \in \bbR^{m\times n}$, where $\bU_{j,:}$ is the input vector of the $j$th user. We learn $U_{jl}$ if $l \in N_j$ and fix it to $0$ otherwise. The output of the first layer of the network for the $j$th user is computed as $\by_{j}=\sigma(\bU_{j,:}\bP^{u}) = \sigma(\sum_{l \in N_j}  U_{jl} \bP^{u}_{l,:})$, where we consider $U_{jl}$ as a learnable parameter, as opposed to the previous works that fix it to the interaction value of the $j$th user on the $l$th item. For the $k$th item, we define the set $N_k=\{l|R_{lk} \ne 0, 1 \le l \le m\}$, which contains the index of all the users who interacted with the $k$th item. We also define the set of item input vectors as $\bV \in \bbR^{n \times m}$. We learn $V_{kl}$ if $l \in N_k$ and fix it to $0$ otherwise. Our objective function is defined as follows:
\begin{equation}
	\label{e:ours}
\hspace{-.5ex}	E_{\text{InP-NCF}}(\btheta,\bU,\bV) =  E_{\text{NCF}}(\btheta^{u}_L,\btheta^{i}_L,\btheta^{ui})\ \text{s.t.}\  \begin{cases} 	
	\bz^{u}_j = \bg^{u}(\by^{u}_j;\btheta^{u}_{L-1}), \  \by^{u}_{j}=\sigma(\sum_{l \in N_j}  U_{jl} \bP^{u}_{l,:}) \\
	\bz^{i}_k = \bg^{i}(\by^{i}_k;\btheta^{i}_{L-1}), \ \by^{i}_{k}=\sigma(\sum_{l \in N_k} V_{kl} \bP^{i}_{l,:})
\end{cases}\hspace{-2ex}
\end{equation}
where $\btheta=[\btheta^{u}_{L-1},\bP^{u},\btheta^{i}_{L-1},\bP^{i},\btheta^{ui}]$ contains all the parameters of the three MLPs, $\bP^{u}$ and $\bP^{i}$ contain the weights of the first layer of the user and item MLPs, respectively, and $\by^{u}_j$ and $\by^{i}_k$ are the output of the first layer of the user and item MLPs, respectively. By setting $\smash{U_{jl} = R_{jl}}$ and $\smash{V_{kl}=R_{lk}}$, our objective in Eq.~\eqref{e:ours} becomes equivalent to the previous works' objective function in Eq.~\eqref{e:R}.

While both $\bU$ and $\bV$ contain $m\times n$ entries, most entries are $0$ because of the sparsity of the interaction matrix and don't get updated during training. We learn $U_{jk}$ and $V_{kj}$ if user $j$ interacted with item $k$. So the total number of learnable variables in $\bU$ and $\bV$ is twice the number of observed interactions, which is insignificant compared to the number of network parameters in large datasets. We give more details about the number of model and input  parameters in the experimental results in Section.~\ref{s:exp} and in Table~\ref{t:numP}.

We apply end-to-end optimization to learn the input and model parameters jointly. We found that the model parameters overfit faster than the input parameters. As we will show in our experiments, we can improve results by post-input optimization. In other words, at the end of the joint optimization, we fix the network parameters and continue learning the input parameters.

\subsection{Our approach from the lens of the graph neural networks}
We first show that neural collaborative filtering can be formulated as a graph convolutional network with a one-layer propagation and a deep message aggregation function. Then we show how our input-learning approach works and what learning the inputs means in this context.

Matrix completion can be cast as a link prediction problem on a bipartite user-item graph \cite{Berg18,Wang19a,Sun19,Ying18,He20}. This graph contains $m$ user nodes and $n$ item nodes. An edge between a user node $j$ and item node $k$ exists if  $R_{jk} \ne 0$. The graph-based methods extract node features using methods such as graph convolutional neural networks (GCNs) and then map them to the interaction value.

NCF has all the three main components of the GCNs-based recommendation systems methods:
\begin{enumerate}
\item  \textbf{Node embeddings.} In NCF, each row of the matrices $\bP^u$ and $\bP^i$ contains the embeddings of the user and item nodes, respectively.  
\item \textbf{Propagation layer.} This layer defines how the messages between the nodes are constructed and how they get aggregated. In NCF, for a connected user-item pair $(j,k)$, the message from user $j$ to item $k$ is the user's embedding $\bP^u_{j,:}$, and the message from item $k$ to user $j$ is the item's embedding $\bP^i_{k,:}$. To understand the message aggregation process in NCF, let us divide it into two steps. In the first step, the weighted sum  of the messages is computed and followed by an activation function:
\begin{equation}
\label{e:prop}
	 \by^{u}_{j}=\sigma(\sum_{l \in N_j}  R_{jl} \bP^{i}_{l,:}) \quad  \by^{i}_{k}=\sigma(\sum_{l \in N_k} R_{lk} \bP^{u}_{l,:}).
\end{equation}
As the second step, NCF applies the last $L-1$ layers of the $\bg^u()$ and $\bg^i()$ to the  $\by^{u}_{j}$ and $\by^{i}_{k}$, respectively, to get more complicated representations. The GCNs-based methods use different variants of the  Eq.~\eqref{e:prop} as the aggregation function \citep{Berg18,Wang19a,Sun19,Ying18,He20}.

\item \textbf{Prediction model.} NCF uses an MLP $f()$   to map the user and item representation to the final interaction value. Dot-product of the user and item representation is one of the popular prediction models in previous GCNs-based works \cite{Wang19a,Sun19,He20}.
\end{enumerate}

The main difference between NCF and GCNs is how they use neural networks to learn better representations: GCNs use higher-order propagations, while NCF relies on the power of the MLPs $\bg^u()$ and $\bg^i()$.

Let us explain how our approach works in this context. InP-NCF improves the first step of the message aggregation of NCF: instead of fixing the weights of each message to the interaction values $R_{jl}$ and $R_{lk}$, InP-NCF learns these weights, which leads to learning better representations.

Using the explicit ratings in learning better representations has been tried in graph-based recommendation systems. \citet{Ying18} learns a specific transformation (weight matrix) for each rating level, resulting in edge-type specific messages. \citet{Fan19} learns a specific embedding vector for each rating level and uses them in learning the item representations. In both methods, similar ratings get similar embeddings or transformation functions, even if the ratings belong to different users or items. InP-NCF considers each rating as a learning parameter and let the similar ratings of different users/items converge to different places. This makes the model learned by InP-NCF more powerful.

\subsection{Attention mechanisms in learning the weights}
The goal here is to explain the difference between our approach InP-NCF and the attention-based methods GAT \cite{Velickovic18} and NAIS \cite{He18a}.

GAT \cite{Velickovic18} showed that assigning different importances to nodes of the same neighborhood can lead to better results in GCNs. GAT uses self-attention to find the weights in computing the weighted sum in aggregating the neighbor messages. In recommendation system literature, NAIS \cite{He18a}, uses attention mechanisms to give different weights to the item embeddings in computing the users' initial representations.

Let us explain the advantages of the InP-NCF. In many recommendation system tasks, we have access to the explicit interactions, which already determine the importance of the neighbors to some extent. In other words, if $R_{jk}>R_{jl}$, then item $k$ is more favorable for user $j$.  Our InP-NCF approach uses this information by initializing the weights with the user and item interactions. GAT and NAIS do not use this information in defining the weights. Also, since the weights are free parameters in InP-NCF and we can initialize them form the given interaction values, we can prove that InP-NCF achieves a smaller empirical error and a better generalization bound than NCF (next subsection). Since weights in attention-based methods are a function of the embeddings, we can not initialize them directly with the ratings.

Here is a detailed explanation of the differences between NAIS and our approach:
\begin{itemize}
\item NAIS relies on item IDs as the input and focuses on the item-item similarity CF models to predict the users' interests. Our work uses item and user interaction vectors as the inputs, learns separate user and item representations, and uses user-item CF to predict the score. In other words, the two methods use completely different structures and follow different CF models.

\item Since NAIS focuses on the item-item similarity, the weights are only defined for the "items" which are rated by the users. There is no easy way to define the weights for the "users" who rated the items. Our approach defines weights for users and items separately. 

\item The items' weights in NAIS are defined based on the similarity to the \emph{target} item. In our case, both user and item weights are \emph{free variables.}
\item Finally, as we will show in experiments of Table~\ref{t:comp_imp}, InP-NCF achieves better results than NAIS.
\end{itemize}

\subsection{Generalization bound of InP-NCF versus NCF}
\label{s:gb}
In this section, we first define the basic elements of the statistical learning theory in collaborative filtering. Then, we show that we can compare generalization bound of the INP-NCF and NCF based on their empirical errors. More details about these definitions can be found in \cite{Shwartz14}.

\paragraph{The collaborative filtering model and the hypothesis class.}
In collaborative filtering (matrix completion), the ultimate goal is to predict the interaction value for a pair of users and items. The assumption is that we have has seen all users and items and a small subset of their interactions in the training set. In other words, there would be no new user or item in the test set. The prediction model of the latent factor-based collaborative filtering methods can be written as: 
\begin{equation}
	\label{e:zmodel}
	\hat{R}_{jk} =  f(\bz_{jk};\btheta^{ui}) \quad \text{s.t.} \quad
	\bz_{jk}=\bm(\bz^u_j,\bz^i_k), \   \bz^{u}_j = \bI^{u}_j\bZ^{u}, \ \bz^{i}_k = \bI^{i}_k\bZ^{i},
\end{equation}
where $ \bI^{u}_j$ and $\bI^{i}_k$ are the one-hot encoding vectors of the $j$th user and $k$th item and $\bm()$ is a fusion function that takes two representations and returns a joint one. The input to this model is the user and item IDs and the parameters are $\btheta^{ui}$, $\bZ^u$, and $\bZ^i$. In statistical learning theory, the model of Eq.~\eqref{e:zmodel} is called the hypothesis class. Different choices for the $f()$ and the fusion function $\bm()$ lead to different hypothesis classes. In matrix factorization, $\bm()$ is an element-wise product and $f()$ is a summation. In NCF, $\bm()$ is concatenation and $f()$ is an MLP. Each specific set of the parameters corresponds to a specific hypothesis.

\paragraph{The learning algorithm.}
The CF methods differ on the way they learn the parameters of the model in Eq.~\eqref{e:zmodel}. A learning algorithm, denoted by $\mathcal{A}$, takes the i.i.d.\ observations $\mathcal{D}$ and generates a specific set of parameters, which creates a hypothesis. In CF, the learning algorithm returns: $\btheta^{ui},\bZ^u,\bZ^i=\mathcal{A}(\mathcal{D})$. 

The latent factor-based CF methods in the literature, including NCF and INP-NCF, are learning algorithms. These methods use different training strategies, sets of parameters, models, and objective functions. At the end, all these methods return a hypothesis from the hypothesis class of Eq.~\eqref{e:zmodel}, i.e., they return the user and item representations and parameters of the $f()$.

While a learning algorithm must return parameters of a hypothesis, it is not limited to use only those parameters in training. InP-NCF and NCF (as learning algorithms) use the user and item MLPs $\bg^u()$ and $\bg^i()$ to learn better representations from interaction vectors or learnable inputs during the training. But, all these parameters are discarded afterward.

Let us mention that in addition to collaborative filtering, there exist other machine learning areas where the learning algorithm uses more parameters than the hypothesis space's parameters. One example is hyper-parameter optimization (HPO) \cite{Tobias16,Wei21}. In HPO, the learning algorithm learns two sets of parameters, one set to predict the target and another set to predict the best hyper-parameters (such as leanring rate and dropout ratio). But, the output of the learning algorithm is always the parameters of the hypothesis (prediction model).

\paragraph{The empirical risk.}
For each hypothesis, this risk is defined based on the difference between the predicted interactions and the true interactions on a training set (a set of observations from the true distribution). For a particular choice of the parameters $\btheta^{ui}$, $\bZ^u$, and $\bZ^i$, the empirical risk is defined as:
\begin{equation}
\label{e:emp_risk}
R_{emp}(\btheta^{ui},\bZ^u,\bZ^i) = \frac{1}{mn} \sum_{j=1}^{m}\sum_{k=1}^n  L(R_{jk},\hat{R}_{jk}),
\end{equation}
where $\hat{R}_{jk}$ is defined in Eq.~\eqref{e:zmodel}.

\paragraph{The generalization bound}
The expected  risk $E^{exp}$ (also called actual risk) is the expectation of the test error for a  hypothesis. Since we do not have access to the true data distribution, we can not compare the hypothesizes directly using $E^{exp}$. Therefore, the generalization bound is the upper bound on $E^{exp}$ and defined as follows \cite{Vapnik95}:
\begin{equation}
	\label{e:grbound}
	R^{emp}(\btheta^{ui},\bZ^u,\bZ^i) + C(H(\text{hypothesis space}), N),
\end{equation}
where $R^{emp}$ is the empirical risk (error on a set of observations), $C()$ is a cost function, $H()$ is complexity of the hypothesis space, and $N$ is the number of training samples. Here, we focus on the VC dimension-based bounds, where complexity is a function of the number of parameters of the model. The smaller the number of parameters the tighter this bound. We can make bound tighter by making the representations $\bZ^{u}$ and $\bZ^{i}$ low-dimensional and making $f()$ a shallow function.

\paragraph{Our main theorem}
\begin{theorem}
$\mathcal{A}_{\text{InP-NCF}}(\mathcal{D})$ achieves a generalization bound  smaller than or equal to the $\mathcal{A}_{\text{NCF}}(\mathcal{D})$.
\end{theorem}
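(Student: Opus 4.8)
The plan is to split the generalization bound of Eq.~\eqref{e:grbound} into its two pieces---the empirical risk $R^{emp}$ and the complexity penalty $C(H(\text{hypothesis space}),N)$---and to argue that the penalty is identical for the two learning algorithms while the empirical risk can only be smaller for InP-NCF. First I would make precise that $\mathcal{A}_{\text{NCF}}$ and $\mathcal{A}_{\text{InP-NCF}}$ both output a hypothesis drawn from the \emph{same} class, namely the latent-factor model of Eq.~\eqref{e:zmodel} with the same fusion function $\bm()$ (concatenation), the same prediction MLP $f()$ (hence the same parameter template $\btheta^{ui}$), and the same embedding dimensions $d_u,d_i$. The auxiliary objects each algorithm manipulates during training---the MLPs $\bg^u(),\bg^i()$, the first-layer matrices $\bP^u,\bP^i$, and, for InP-NCF, the input matrices $\bU,\bV$---are discarded afterward and never appear in the returned hypothesis, which is exactly the learning-algorithm-versus-hypothesis-class distinction set up above. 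Consequently the VC dimension of the hypothesis class depends only on $f()$, $\bm()$, and $d_u,d_i$, so $C(H(\cdot),N)$ takes the same value in both bounds, and it suffices to show $R^{emp}(\mathcal{A}_{\text{InP-NCF}}(\mathcal{D}))\le R^{emp}(\mathcal{A}_{\text{NCF}}(\mathcal{D}))$.

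Next I would compare empirical risks by a containment argument on reachable hypotheses. The key identity is the one already noted after Eq.~\eqref{e:ours}: fixing $U_{jl}=R_{jl}$ for $l\in N_j$ and $V_{kl}=R_{lk}$ for $l\in N_k$ turns $\by^u_j=\sigma(\sum_{l\in N_j}U_{jl}\bP^u_{l,:})=\sigma(\bU_{j,:}\bP^u)$ into $\sigma(\bR_{j,:}\bP^u)$, which is precisely the first layer of $\bg^u()$ in Eq.~\eqref{e:R}; applying the remaining $L-1$ layers $\btheta^u_{L-1}$ then reproduces the $L$-layer map $\bg^u(\bR_{j,:};\btheta^u_L)$, and symmetrically for items with $\bV=\bR^{\top}$. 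Hence every representation pair $(\bz^u_j,\bz^i_k)$ attainable under the NCF parametrization is also attainable under the InP-NCF parametrization, so the set of hypothesis parameters $(\btheta^{ui},\bZ^u,\bZ^i)$ reachable by $\mathcal{A}_{\text{InP-NCF}}$ contains that reachable by $\mathcal{A}_{\text{NCF}}$. Since $R^{emp}$ evaluated at a returned hypothesis equals the training objective at the learned parameters, we get $\inf E_{\text{InP-NCF}}\le\inf E_{\text{NCF}}$ over the respective parameter sets.

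Finally I would bridge from ``infimal'' to ``achieved by the algorithm'' and conclude. The honest way to do this is to invoke the warm-start remark from Section~\ref{s:prop}: run $\mathcal{A}_{\text{InP-NCF}}$ initialized at the output of $\mathcal{A}_{\text{NCF}}$, i.e.\ at $\bU=\bR$, $\bV=\bR^{\top}$ together with NCF's learned network weights; the InP-NCF objective at this initialization equals NCF's final training error, and a monotone descent optimizer (optionally followed by the post-input optimization described in Section~\ref{s:prop}) can only decrease it, giving $R^{emp}(\mathcal{A}_{\text{InP-NCF}}(\mathcal{D}))\le R^{emp}(\mathcal{A}_{\text{NCF}}(\mathcal{D}))$. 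Alternatively, idealizing both algorithms as exact empirical-risk minimizers over their own parameters makes the containment step of the previous paragraph give the inequality directly. Adding the common complexity term $C(H(\cdot),N)$ to both sides then yields the claimed comparison of generalization bounds. I expect the main point of friction to be precisely this last step: neural-network training is non-convex, so ``$\mathcal{A}$ achieves'' a given error is not literally true for plain gradient descent, and the argument must either assume exact minimization or lean on the warm-start/monotone-descent construction to make the empirical-risk comparison rigorous; everything else (equality of hypothesis classes, equality of the complexity term, and the reduction $\bU=\bR,\bV=\bR^{\top}$) is structural and routine.
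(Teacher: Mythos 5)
Your proposal is correct and follows essentially the same route as the paper: the complexity term is identical because both algorithms return hypotheses from the same class of Eq.~\eqref{e:zmodel}, and the empirical-risk comparison is obtained exactly as in the paper's Lemmas~\ref{l:emp_tr}--\ref{l:emperr}, via the reduction $\bU=\bR$, $\bV=\bR^{\top}$ plus a monotone-descent (alternating/warm-start) argument that starts InP-NCF from the NCF solution. Even your closing caveat about non-convexity mirrors the paper's own discussion in the supplementary material, where the idealized-ERM version (Lemma~\ref{l:trerr_optimal}) and the alternating-optimization version (Lemma~\ref{l:trerr_supp}) are both given.
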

\begin{proof}

Let us assume the learning algorithms return the following parameters at the convergence: 
\begin{equation}
{\btheta^{ui}}^*,{\bZ^{u}}^{*},{\bZ^{i}}^{*} = \mathcal{A}_{\text{InP-NCF}}(\mathcal{D}), \qquad \qquad  \hat{\btheta^{ui}},\hat{\bZ^{u}},\hat{\bZ^{i}} = \mathcal{A}_{\text{NCF}}(\mathcal{D}).
\end{equation}

Generalization bound is defined at the right side of Eq.~\eqref{e:grbound}. The theorem holds because: 1) $C()$ is determined based on the complexity of the hypothesis space of Eq.~\eqref{e:zmodel}, is independent of the learning algorithms, and is the same for both $\mathcal{A}_{\text{InP-NCF}}$ and $\mathcal{A}_{\text{NCF}}$, and 2) as we will prove in Lemma~\ref{l:emperr}, we have:
\begin{equation}
\label{e:rtempdiff}
R^{\text{emp}}_{\text{Inp-NCF}} = R^{\text{emp}}({\btheta^{ui}}^*,{\bZ^{u}}^{*},{\bZ^{i}}^{*}) \le   R^{\text{emp}}_{\text{NCF}}  = R^{\text{emp}}( \hat{\btheta^{ui}},\hat{\bZ^{u}},\hat{\bZ^{i}} ).
\end{equation}
\end{proof}
We use the following three lemmas to prove the inequality of the Eq.~\eqref{e:rtempdiff}. 

\begin{lemma}
\label{l:emp_tr}
At convergence, the training error of the learning algorithms is equal to the empirical error of their returned hypothesis:
\begin{equation}
R^{\text{emp}}_{\text{Inp-NCF}} = E_{\text{InP-NCF}}(\btheta^*,\bU^*,\bV^*), \quad  \quad R^{\text{emp}}_{\text{NCF}} = E_{\text{NCF}}(\hat{\btheta}).
\end{equation}
\end{lemma}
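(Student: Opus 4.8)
The plan is to unfold the definitions on both sides and check that the two numbers coincide term by term. Recall the roles set up in Section~\ref{s:gb}: a \emph{learning algorithm} minimizes an objective over its own working parameters and then reports both the training value it attained and a hypothesis from the class of Eq.~\eqref{e:zmodel}. For $\mathcal{A}_{\text{InP-NCF}}$ the working parameters are $(\btheta,\bU,\bV)$, the objective is $E_{\text{InP-NCF}}$ of Eq.~\eqref{e:ours}, and the training value attained at convergence is, by definition, $E_{\text{InP-NCF}}(\btheta^{*},\bU^{*},\bV^{*})$. Hence the whole content of the lemma is that the empirical risk $R^{\text{emp}}$ of the \emph{returned hypothesis}, as defined by Eq.~\eqref{e:emp_risk} over the hypothesis class of Eq.~\eqref{e:zmodel}, equals this attained training value; the NCF statement is the same claim for the special case in which the working inputs are frozen at $U_{jl}=R_{jl}$, $V_{kl}=R_{lk}$.

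First I would describe how the returned hypothesis is read off. Write $\btheta^{*}=[{\btheta^{u}_{L-1}}^{*},{\bP^{u}}^{*},{\btheta^{i}_{L-1}}^{*},{\bP^{i}}^{*},{\btheta^{ui}}^{*}]$. Define ${\bZ^{u}}^{*}\in\bbR^{m\times d_u}$ to be the matrix whose $j$th row is $\bg^{u}\big(\sigma(\sum_{l\in N_j}U^{*}_{jl}{\bP^{u}}^{*}_{l,:});{\btheta^{u}_{L-1}}^{*}\big)$, i.e.\ the stored $j$th user representation $\bz^{u}_j$ produced by the constraints in Eq.~\eqref{e:ours}; define ${\bZ^{i}}^{*}$ analogously from $\bV^{*},{\bP^{i}}^{*},{\btheta^{i}_{L-1}}^{*}$; and keep ${\btheta^{ui}}^{*}$ unchanged. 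This triple is a member of the hypothesis class of Eq.~\eqref{e:zmodel}: $\bI^{u}_j{\bZ^{u}}^{*}$ is the $j$th row of ${\bZ^{u}}^{*}$, i.e.\ $\bz^{u}_j$; $\bI^{i}_k{\bZ^{i}}^{*}$ is the $k$th row of ${\bZ^{i}}^{*}$, i.e.\ $\bz^{i}_k$; and in NCF the fusion $\bm$ is concatenation, so $\bm(\bI^{u}_j{\bZ^{u}}^{*},\bI^{i}_k{\bZ^{i}}^{*})=[\bz^{u}_j,\bz^{i}_k]$, which is exactly the $\bz_{jk}$ passed to $f$ in Eq.~\eqref{e:ours}.

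Then I would substitute this hypothesis into Eq.~\eqref{e:emp_risk}. Its predicted interactions are $\hat{R}_{jk}=f(\bz_{jk};{\btheta^{ui}}^{*})$ with $\bz_{jk}=[\bz^{u}_j,\bz^{i}_k]$ as just computed --- the very same predictions appearing inside $E_{\text{InP-NCF}}(\btheta^{*},\bU^{*},\bV^{*})$. Since both Eq.~\eqref{e:emp_risk} and the objective in Eq.~\eqref{e:ncf} average $L(R_{jk},\hat R_{jk})$ over the same index set (all $j,k$), the two sums are equal number for number, i.e.\ $R^{\text{emp}}({\btheta^{ui}}^{*},{\bZ^{u}}^{*},{\bZ^{i}}^{*})=E_{\text{InP-NCF}}(\btheta^{*},\bU^{*},\bV^{*})$. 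Running the identical argument for $\mathcal{A}_{\text{NCF}}$ --- with $\hat\btheta=[\hat\btheta^{u}_{L},\hat\btheta^{i}_{L},\hat\btheta^{ui}]$ and the returned embeddings read off as $\hat{\bZ^{u}}_{j,:}=\bg^{u}(\bR_{j,:};\hat\btheta^{u}_{L})$, $\hat{\bZ^{i}}_{k,:}=\bg^{i}(\bR_{:,k};\hat\btheta^{i}_{L})$ --- yields the second equality.

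The only thing requiring care is conceptual rather than computational: keeping straight the distinction between the parameters a learning algorithm \emph{uses during training} (all of $\bU,\bV,\bP^{u},\bP^{i}$ together with the remaining MLP weights $\btheta^{u}_{L-1},\btheta^{i}_{L-1}$, and likewise for NCF) and the strictly smaller set $({\btheta^{ui}},\bZ^{u},\bZ^{i})$ that constitutes the \emph{output hypothesis}. The extra training-time parameters are ``compiled away'' into the numeric matrices $\bZ^{u},\bZ^{i}$, and the substance of the lemma is precisely that this compilation leaves the loss on the observed pairs numerically unchanged, which is immediate once Eqs.~\eqref{e:ours},~\eqref{e:zmodel} and~\eqref{e:emp_risk} are lined up. I therefore do not anticipate a genuine obstacle; the proof is bookkeeping.
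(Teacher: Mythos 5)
Your proposal is correct and follows essentially the same route as the paper's proof in the supplementary material: you read off the returned hypothesis by defining ${\bZ^{u}}^{*}$ and ${\bZ^{i}}^{*}$ as the matrices of representations produced by the trained networks from the learned inputs, observe that $\bI^{u}_j{\bZ^{u}}^{*}$ and $\bI^{i}_k{\bZ^{i}}^{*}$ recover exactly the $\bz^{u}_j$ and $\bz^{i}_k$ appearing in the training objective, and conclude that the empirical risk of Eq.~\eqref{e:emp_risk} and the training error of Eq.~\eqref{e:ours} are the same sum term by term. The paper's Lemma~\ref{l:emp_tr_supp} performs the identical unfolding (writing ${\bZ^{u}}^{*}=\bg^{u}(\bU^{*};{\btheta^{u}}^{*})$ and rewriting the constraints), so no substantive difference.
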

\begin{proof}
The proof is in the supplementary material.
\end{proof}

\begin{lemma}
\label{l:trerr}
Training error  of InP-NCF in Eq.~\eqref{e:ours}, which uses alternating optimization (explained in the supplementary material), is smaller than or equal to the  NCF in Eq.~\eqref{e:R}, i.e., $E_{\text{InP-NCF}}(\btheta^*,\bU^*,\bV^*) \le E_{\text{NCF}}(\hat{\btheta})$.
\end{lemma}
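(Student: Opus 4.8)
The plan is to exhibit, for the InP-NCF objective, a feasible point whose objective value equals the NCF training-error minimum, and then invoke monotonicity of the alternating optimization to conclude $E_{\text{InP-NCF}}(\btheta^*,\bU^*,\bV^*)$ can only be smaller. Concretely, I would first recall from the paragraph following Eq.~\eqref{e:ours} that setting $U_{jl}=R_{jl}$ and $V_{kl}=R_{lk}$ makes the InP-NCF objective coincide exactly with the NCF objective Eq.~\eqref{e:R}: under that substitution the first-layer outputs $\by^u_j=\sigma(\sum_{l\in N_j}U_{jl}\bP^u_{l,:})$ and $\by^i_k=\sigma(\sum_{l\in N_k}V_{kl}\bP^i_{l,:})$ become $\sigma(\bR_{j,:}\bP^u)$ and $\sigma(\bR_{:,k}\bP^i)$, which are precisely the first-layer outputs of $\bg^u,\bg^i$ in Eq.~\eqref{e:R}. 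Hence for every choice of network parameters $\btheta$ there is a corresponding InP-NCF point with the same loss, and in particular for $\hat{\btheta}=\mathcal{A}_{\text{NCF}}(\mathcal{D})$ we get a feasible triple $(\hat{\btheta},\bR_{\text{row}},\bR_{\text{col}})$ with $E_{\text{InP-NCF}}(\hat{\btheta},\bR_{\text{row}},\bR_{\text{col}})=E_{\text{NCF}}(\hat{\btheta})=R^{\text{emp}}_{\text{NCF}}$ (the last equality being Lemma~\ref{l:emp_tr}).

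Next I would argue that InP-NCF, being trained by an alternating/descent optimization that never increases the objective and that can be initialized at this very point, returns $(\btheta^*,\bU^*,\bV^*)$ with $E_{\text{InP-NCF}}(\btheta^*,\bU^*,\bV^*)\le E_{\text{InP-NCF}}(\hat{\btheta},\bR_{\text{row}},\bR_{\text{col}})$. Chaining the two facts gives $E_{\text{InP-NCF}}(\btheta^*,\bU^*,\bV^*)\le E_{\text{NCF}}(\hat{\btheta})$, which is exactly the claim. The cleanest way to phrase this without appealing to the internals of a particular optimizer is: since the NCF feasible region embeds isometrically (loss-preservingly) into the InP-NCF feasible region, $\inf E_{\text{InP-NCF}}\le \inf E_{\text{NCF}}$, and then identify the algorithm outputs with the respective infima (at convergence) — this is presumably what the supplementary ``alternating optimization'' argument formalizes, and I would cite it rather than reprove convergence.

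The main obstacle is the gap between ``the feasible set is larger, so the infimum is no larger'' and ``the algorithm actually attains something no larger.'' A global-minimum claim is false for nonconvex MLP training, so the statement must be read as a statement about the specific alternating scheme: one step optimizes $\btheta$ with $(\bU,\bV)$ fixed, the next optimizes $(\bU,\bV)$ with $\btheta$ fixed, each step non-increasing, initialized so that the $(\bU,\bV)$ block starts at $(\bR_{\text{row}},\bR_{\text{col}})$ and the $\btheta$ block is run to (or past) the NCF solution. The honest version of the argument therefore hinges on (i) the initialization being exactly the NCF-equivalent point and (ii) monotonicity of each alternating update — both of which I would state as properties of the procedure described in the supplement and use as given. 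A secondary subtlety is matching dimensions: the hypothesis-space parameters in Eq.~\eqref{e:zmodel} are $(\btheta^{ui},\bZ^u,\bZ^i)$ while the learning algorithm also carries $\btheta^u_L,\btheta^i_L$ (resp.\ $\btheta^u_{L-1},\bP^u,\ldots,\bU,\bV$); Lemma~\ref{l:emp_tr} is what licenses replacing the ``learning-algorithm objective'' $E_{\text{NCF}},E_{\text{InP-NCF}}$ by the ``hypothesis empirical risk'' $R^{\text{emp}}$, so I would invoke it at the two endpoints of the chain and otherwise work entirely with the $E(\cdot)$ objectives, where the embedding argument is transparent.
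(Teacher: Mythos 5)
Your proposal is correct and follows essentially the same route as the paper's proof in the supplementary material: the paper initializes the alternating scheme at $\bU^{0}=\bR$, $\bV^{0}=\bR^{T}$, observes that the first $\btheta$-update with these inputs fixed is exactly NCF training (so $E_{\text{NCF}}(\hat{\btheta})=E_{\text{InP-NCF}}(\btheta^{1},\bU^{0},\bV^{0})$), and then chains the monotone non-increase of each alternating step down to $(\btheta^{*},\bU^{*},\bV^{*})$. You also correctly flag the one point the argument genuinely hinges on --- that the claim is about the specific monotone procedure started from the NCF-equivalent point, not about global minima of a nonconvex objective --- which is precisely the role the alternating-optimization construction plays in the paper.
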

\begin{proof}
The proof is in the supplementary material.
\end{proof}

\begin{lemma}
\label{l:emperr}
At convergence, empirical error of the hypothesis returned by the InP-NCF is smaller than or equal to the one returned by NCF:
\begin{equation}
R^{\text{emp}}_{\text{Inp-NCF}} \le R^{\text{emp}}_{\text{NCF}} 
\end{equation}
\end{lemma}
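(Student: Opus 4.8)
The plan is to chain together the two previous lemmas. Lemma~\ref{l:emp_tr} identifies the empirical error of the hypothesis returned by each learning algorithm with that algorithm's final training objective value: $R^{\text{emp}}_{\text{Inp-NCF}} = E_{\text{InP-NCF}}(\btheta^*,\bU^*,\bV^*)$ and $R^{\text{emp}}_{\text{NCF}} = E_{\text{NCF}}(\hat{\btheta})$. Lemma~\ref{l:trerr} then asserts $E_{\text{InP-NCF}}(\btheta^*,\bU^*,\bV^*) \le E_{\text{NCF}}(\hat{\btheta})$. Substituting the first pair of equalities into this inequality gives immediately $R^{\text{emp}}_{\text{Inp-NCF}} \le R^{\text{emp}}_{\text{NCF}}$, which is exactly the claim. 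So the proof of Lemma~\ref{l:emperr} itself is a one-line composition; the real content lives in the two lemmas it invokes.

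Since this excerpt ends at the statement of Lemma~\ref{l:emperr} (and the proofs of Lemmas~\ref{l:emp_tr} and~\ref{l:trerr} are deferred to the supplement), let me sketch how I would establish those supporting facts, because that is where the substance is. For Lemma~\ref{l:emp_tr}, the key observation is that the objective functions $E_{\text{InP-NCF}}$ and $E_{\text{NCF}}$ are \emph{defined} as the average loss $\frac{1}{mn}\sum_{j,k} L(R_{jk},\hat R_{jk})$ over the training interactions, while the empirical risk $R^{\text{emp}}$ in Eq.~\eqref{e:emp_risk} is that same average but expressed through the hypothesis-class parametrization of Eq.~\eqref{e:zmodel}, i.e.\ via the learned representations $\bZ^u,\bZ^i$ fed directly into $f()$. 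So I would show that when the learning algorithm converges, the representations it feeds forward (the outputs $\bz^u_j,\bz^i_k$ of the $\bg^u(),\bg^i()$ MLPs, which are discarded afterward) can be read off as rows of matrices $\bZ^u,\bZ^i$ together with the retained $\btheta^{ui}$; by construction these produce identical predictions $\hat R_{jk}$ on every training pair, hence identical average loss. This is essentially a bookkeeping argument: the "extra" parameters ($\btheta^u_L,\btheta^i_L,\bU,\bV$ or the interaction vectors) only serve to compute the $\bz$'s, so at any fixed parameter setting the training objective equals the empirical risk of the induced hypothesis.

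The genuinely delicate step is Lemma~\ref{l:trerr}, the claim that InP-NCF's converged training error is no larger than NCF's. The natural argument uses the fact, noted right after Eq.~\eqref{e:ours}, that setting $U_{jl}=R_{jl}$ and $V_{kl}=R_{lk}$ reduces $E_{\text{InP-NCF}}$ to exactly $E_{\text{NCF}}$; hence NCF's optimal objective value is attainable within InP-NCF's larger search space, so any \emph{global} minimizer of InP-NCF does at least as well. The obstacle is that neither method reaches a global minimum — both use alternating/end-to-end gradient optimization on a nonconvex objective — so "at convergence" must mean convergence of the specific alternating scheme, and one needs to argue that this scheme, when initialized by setting $\bU,\bV$ to the interaction values (so that InP-NCF starts from NCF's solution, or is run with NCF as an inner subroutine), produces iterates whose objective is monotonically nonincreasing and therefore stays $\le E_{\text{NCF}}(\hat\btheta)$. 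I expect the supplement to make this precise by defining the InP-NCF alternating procedure so that its first phase literally \emph{is} running NCF to convergence (obtaining $\hat\btheta$ with $\bU=\bR$, $\bV=\bR^\top$), after which subsequent phases only decrease the objective further; monotonicity of alternating minimization then closes the gap. The main thing to watch is that this monotonicity argument is the crux — Lemma~\ref{l:emperr} and the main theorem are then trivial corollaries.
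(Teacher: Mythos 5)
Your proof is correct and takes exactly the same route as the paper: Lemma~\ref{l:emperr} is obtained by substituting the equalities of Lemma~\ref{l:emp_tr} into the inequality of Lemma~\ref{l:trerr}, and your sketches of the two supporting lemmas (the bookkeeping identification of training objective with empirical risk, and the monotone alternating scheme initialized at $\bU=\bR$, $\bV=\bR^{T}$ so that NCF's converged objective is the first iterate's value) match the supplementary material's arguments. You also correctly identify the crux, namely that the comparison only holds for the specific alternating optimization whose first step reproduces NCF, not for arbitrary local minima.
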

\begin{proof}
The lemma holds because 1) from Lemma~\ref{l:emp_tr}, we have that the value of the empirical error is the same as the training error, and 2) from Lemma~\ref{l:trerr}, we have that training error of the InP-NCF is smaller than the training error of the NCF.
\end{proof}

In the supplementary material section~\ref{s:sup}, we have put the details of the equations and discussions on how to achieve the exact form of generalization bound.

\section{Experiments}
\label{s:exp}
\subsection{Experimental setup}

\paragraph{Datasets.}
For rating prediction, we use four datasets: 1) ml100k \citep{Harper15} dataset, which contains $100\,000$ ratings from around $1\,000$ users on $1\,600$ movies, 2) ml1m \citep{Harper15} dataset, which contains $1$ million ratings of around $6\,000$ users on $4\,000$ movies, 3) Amazon review data (Grocery and Gourmet Food) \citep{He16a} dataset, which contains $508\,800$ Amazon's Grocery and Gourmet Food product review from $86\,400$ users on $108\,500$ items, and 4) Ichiba dataset, which contains $1.5$ million Rakuten Ichiba\footnote{\url{https://rit.rakuten.com/data_release/}}'s product review from $324\,000$ users on $294\,000$ items. In each of the datasets, the ratings/reviews are integer numbers between $1$ and $5$.

For each dataset, we randomly select $80\%$ of the ratings as the training set, $10\%$ as the validation set, and $10\%$ as the test set. We repeat the process three times to create three training/validation/test sets. For each evaluation metric, we report the mean of each method on these three sets. The supplementary material section~\ref{s:sup} contains the standard deviation of the methods and a set of additional experiments that are not listed here.

For the implicit feedback prediction, we use two datasets: 1) ml1m \citep{Harper15} dataset, with the same number of users, items, and interactions as the one in the rating prediction task and 2) Amazon music (AMusic)\citep{He16a}, which contains $1\,700$ users, $13\,000$ items, and $46\,000$  ratings. Implicit feedback is achieved by turning all ratings to $1$.
Similar to \citep{Dong19, He17}, we use the leave-one-out evaluation, where the latest interaction of each user is held-out in the test set and the rest are considered as the training set.

\begin{table}
\caption{Number of neural network (NN) and input parameters in two network architectures used in Table~\ref{t:ml100L} and Table~\ref{t:valtr}.}
	\label{t:numP}
	\centering
			\begin{tabular}{@{}c@{\hspace{1ex}}c@{\hspace{1ex}}c@{\hspace{1ex}}c@{\hspace{3ex}}c@{\hspace{1ex}}c@{\hspace{1ex}}c@{}}
	\toprule
		& \multicolumn{3}{c}{$\bg^{u},\bg^{i},f=[1,1,2]$} & \multicolumn{3}{c}{$\bg^{u},\bg^{i},f=[3,3,4]$}\\
	& ml100k & ml1m & Ichiba & ml100k & ml1m & Ichiba\\
	\midrule
	 NN & $283K$  & $3.0M$ & $61.9M$  & $1.7M$ & $11.5M$ & $62.0M$ \\
	 \midrule
	Input & $160K$  & $1.6M$ & $2.4M$  & $160K$ & $1.6M$ & $2.4M$ \\
	 \bottomrule
	 	\end{tabular}
\end{table}
\vspace{-2ex}
\paragraph{Evaluation metrics.}  We report the root mean square error (RMSE) and average precision to evaluate the rating prediction performance. In RMSE we measure the error between the predicted and  actual rating. In precision, for each user, we retrieve the top p\%  of the items with the highest predicted ratings and compare them with the actual top p\% of items. We compute the precision for each user and report the average score.

We report Normalized Discounted Cumulative Gain (NDCG) and Hit Ratio (HR) to evaluate the implicit feedback prediction performance. We follow the same protocols as \citep{Dong19, He17} and truncate the rank list at $10$ for both metrics. HR measures whether the actual test item exists in the top-ranked list. NDCG is a measure of ranking quality and gives higher scores to the hits at top positions in the ranked list. We compute the HR and NDCG for each user and report the average score.

We have put the details and equations of all the evaluation metrics in the supplementary material section~\ref{s:sup}.
\vspace{-2ex}
\paragraph{Implementation details.}
We implemented our method using Keras with TensorFlow 2 backend. We ran all the experiments on a $12$GB GPU. For each method, we tried a set of activation functions (ReLU, SELU, and TanH), a range of learning rates and regularization parameters from $10^{-1}$ to $10^{-5}$, a set of optimizers (Adam, SGD, and RMSprop), and picked the one that works best. The supplementary material section \ref{s:sup} contains the details of the experimental setting.
\vspace{-2ex}
\paragraph{Number of network and input parameters.} 
In our tables, the notation $\bg^{i},\bg^u,f = [x,y,z]$ means that the three MLPs  $\bg^{i}$, $\bg^{u}$, and $f$ contain $x$, $y$, and $z$ hidden layers, respectively. Table~\ref{t:numP} contains the number of network and input parameters of two network architectures in the three datasets.  We call the network with smaller parameters ($\bg^{i},\bg^u,f = [1,1,2]$) shallow and the one with the larger number of parameters ($\bg^{i},\bg^u,f = [3,3,4]$) deep.

The number of network parameters is mainly determined by the size of the first layer of $\bg^u$ and $\bg^i$, which we called the implicit embedding layer. The size of this layer is proportional to the number of users and items in the dataset. We can see in Table~\ref{t:numP} that the number of neural network parameters in Ichiba is significantly more  than ml1m. The reason is that the number of users and items in Ichiba is around $60$ times larger than ml1m. We can also see that the number of inputs is always smaller than the network parameters. The difference between the input and network parameters is huge in Ichiba. This indicates that the main gain of our model is not from the number of parameters, but from the better generalization with learned input

\subsection{Experimental results}
\paragraph{Rating prediction: learning the input improves the performance.} In our experiments, NCF stands for neural collaborative filtering with the fixed interaction vector as the input, where the objective function can be found in Eq.~\eqref{e:R} and the architecture can be found in Fig.~\ref{f:inputs}. Our method is denoted by InP-NCF.

We try two different values to initialize the inputs: 1) we use the ratings as the input and denote the methods NCF (R) and InP-NCF (R) and 2) we turn all ratings  into $1$ (making them implicit) and denote the methods by NCF (I) and InP-NCF (I). These changes do not affect the outputs and in both cases the goal is to predict the ratings.

In Table~\ref{t:ml100L}, we compare RMSE and precision of InP-NCF with NCF on ml100k, ml1m, and Ichiba datasets. By comparing NCF (R) with InP-NCF (R) and NCF (I) with InP-NCF (I), we can see that learning the input consistently improves the results and gives lower RMSE and higher precision. 

The results of Table~\ref{t:ml100L} show that, even when we fix the inputs, explicit ratings are not always the best inputs for NCF models. In multiple cases, specifically in the Ichiba dataset, NCF with the implicit feedback as the input outperforms NCF with the ratings as the input.

Note that our objective function is highly nonconvex so different input initializations will lead to different results. While in both ml1m and ml100k datasets explicit rating initialization leads to slightly better results, it is in the Ichiba dataset where implicit feedback provides better results.

\begin{table}
	\caption{Rating prediction. We compare InP-NCF (learning the input and parameters) and NCF (fixed input).  Our approach consistently achieves better results. We report the mean of three runs. Supplementary material section \ref{s:sup} contains standard variation of methods and more results.}
	\label{t:ml100L}
	\centering
	\begin{tabular}{@{}c@{\hspace{1ex}}c@{\hspace{1ex}}c@{\hspace{1ex}}c@{\hspace{2ex}}@{\hspace{1ex}}c@{\hspace{1ex}}c@{\hspace{1ex}}c@{}}
	& \multicolumn{6}{c}{\dotfill RMSE \dotfill}  \\
	\toprule
		& \multicolumn{3}{c}{$\bg^{u},\bg^{i},f=[1,1,2]$} & \multicolumn{3}{c}{$\bg^{u},\bg^{i},f=[3,3,4]$}\\
	& ml100k & ml1m & Ichiba & ml100k & ml1m & Ichiba\\
	\midrule
	 NCF (R) & $0.903$  & $0.858$ & $0.883$  & $0.897$ & $0.856$ & $0.884$ \\
	 \midrule
	 	 InP-NCF (R) & $\mathbf{0.892}$ & $\mathbf{0.845}$ & $\mathbf{0.871}$  & $\mathbf{0.894}$ & $\mathbf{0.845}$  & $\mathbf{0.873}$  \\
	 \midrule
	 NCF (I) & $0.905$  & $0.860$ & $0.875$  & $0.896$ & $0.854$ & $0.874$\\ 
	 \midrule
	 InP-NCF (I)& $\mathbf{0.892}$  & $\mathbf{0.846}$ & $\mathbf{0.867}$  & $\mathbf{0.894}$ & $\mathbf{0.844}$ & $\mathbf{0.863}$ \\
	 \bottomrule
	 \\[-2ex]
\end{tabular}
		\begin{tabular}{@{}c@{\hspace{1ex}}c@{\hspace{1ex}}c@{\hspace{1ex}}c@{\hspace{2ex}}@{\hspace{1ex}}c@{\hspace{1ex}}c@{\hspace{1ex}}c@{}}
		& \multicolumn{6}{c}{\dotfill precision \dotfill}  \\
	\toprule
		& \multicolumn{3}{c}{$\bg^{u},\bg^{i},f=[1,1,2]$} & \multicolumn{3}{c}{$\bg^{u},\bg^{i},f=[3,3,4]$}\\
	& ml100k & ml1m & Ichiba & ml100k & ml1m & Ichiba\\
	\midrule
	 NCF (R) & $\mathbf{69.8\%}$  & $69.4\%$ & $79.5\%$  & $69.3\%$ & $70\%$ & $79.4\%$ \\
	 \midrule
	 	 InP-NCF (R) & $\mathbf{69.8\%}$ & $\mathbf{70.3\%}$ & $\mathbf{79.8\%}$  & $\mathbf{70.0\%}$ & $\mathbf{70.4\%}$  & $\mathbf{79.6\%}$  \\
	 \midrule
	 NCF (I) & $69.4\%$  & $69.3\%$ & $79.5\%$  & $69.7\%$ & $69.8\%$ & $79.6\%$\\ 
	 \midrule
	 InP-NCF (I)& $\mathbf{70.0\%}$  & $\mathbf{70.0\%}$ & $\mathbf{79.8\%}$  & $\mathbf{69.8\%}$ & $\mathbf{70.2\%}$ & $\mathbf{80.1\%}$ \\
	 \bottomrule
	\end{tabular}
\end{table}

\begin{table}
	\caption{Implicit feedback prediction. We train three different neural network architectures proposed in \cite{Dong19} with and without learning the inputs. The input learning strategy consistently improves the results. Our method also performs better than NAIS \cite{He18a}, which is an attention-based CF method.}
	\label{t:comp_imp}
		\begin{center}
			\begin{tabular}[c]{ccccc} 
				\toprule
				& \multicolumn{2}{c}{ml1m} & \multicolumn{2}{c}{ Amazon music} \\
 				method & HR & NDCG & HR & NDCG\\
				\midrule
				CFNet-rl & $70.7\%$ & $43.1\%$ & $39.2\% $ & $24.3\%$ \\
				\midrule
				InP-CFNet-rl & $\mathbf{71.5}\%$ & $\mathbf{43.3}\%$  & $\mathbf{40.0}\%$ & $\mathbf{24.6}\%$ \\
				\midrule
				CFNet-ml & $70.4\%$ & $42.5\%$  & $40.1\% $ & $24.5\%$ \\
				\midrule
				InP-CFNet-ml & $\mathbf{71.2}\%$ & $\mathbf{43.3}\%$ &$\mathbf{41.0}\% $ & $\mathbf{25.2}\%$\\
				\midrule
				CFNet  &  $70.4\%$& $42.8\%$  &  $39.0\% $ & $24.6\%$\\
				\midrule
				InP-CFNet  & $\mathbf{71.8}\%$ & $\mathbf{43.7}\%$&  $\mathbf{39.7}\% $ & $\mathbf{25.1}\%$ \\
				\midrule
				NAIS  & $68.4\%$ & $40.6\%$&  $35.2\% $ & $20.1\%$ \\
				
				\bottomrule
			\end{tabular}
		\end{center}
\end{table}

\paragraph{Implicit feedback prediction: learning the input improves the performance.}
In \cite{Dong19}, three different neural network architectures are used to create the representations in NCF. These architectures are called CFNet-rl, CFNet-ml, and CFNet. CFNet-rl uses MLPs  on top of user/item input interaction vectors to generate user/item representation and uses element-wise product to combine them. CFNet-ml uses a single linear layer to generate representations and uses concatenation to combine them. The CFNet architecture combines the representations of the CFNet-rl and CFNet-ml. Note that the input is fixed to the implicit interaction vector and the goal is to predict the implicit feedback. We ran these three models using the authors' code, which is publicly available \footnote{\url{https://github.com/familyld/DeepCF}}. The preprocessed datasets are also available within the code. We have put the HR and NDCG of these models in Table~\ref{t:comp_imp}.

To learn the input parameters in addition to the model parameters, we slightly changed the original code without changing the hyper-parameters. As we can see in Table~\ref{t:comp_imp}, the input learning strategy consistently improves the HR and NDCG of all the architectures.

We provide more details about CFNet and how our approach can improve its results in the supplementary material section \ref{s:sup}.

\paragraph{Comparison with the attention-based CF.} In Table~\ref{t:comp_imp}, we have put the results of NAIS and compared it with NCF and InP-NCF. We ran NAIS using the authors' code, which is publicly available \footnote{\url{https://github.com/AaronHeee/Neural-Attentive-Item-Similarity-Model}}. As we can see, InP-NCF achieves better results than NAIS on both datasets.

\begin{table}
	\caption{Training and validation RMSE of InP-NCF and NCF at different epochs. Our approach achieves lower RMSE using both deep and shallow networks.}
	\label{t:valtr}
	\centering
	\begin{tabular}{@{}c@{\hspace{1ex}}c@{\hspace{1ex}}c@{\hspace{1ex}}c@{\hspace{1ex}}c@{\hspace{1ex}}c@{\hspace{1ex}}c@{\hspace{1ex}}c@{}}
	\multicolumn{8}{c}{\dotfill ml1m, shallow network \dotfill}  \\
	\toprule
	\# epochs & $0$ & $2$ & $4$  & $6$ & $10$ & $15$ & best\\
	\midrule
	 NCF train& $1.61$ & $0.858$ & $0.835$ &  $0.824$ & $0.792$ & $0.764$  \\
	  NCF val & $1.60$ & $0.880$ & $0.867$ &  $0.867$  & $0.861$ & $0.872$ & $0.861$  \\
	 \midrule
	  InP-NCF train & $1.61$  & $0.849$ & $0.834$ & $0.798$ & $0.767$ & $0.732$ \\
	 InP-NCF val & $1.60$ &  $0.868$ & $0.876$ &  $0.853$ & $0.858$ & $0.864$ & $0.853$\\
	\bottomrule
	\\[-1ex]
	\end{tabular}
	\begin{tabular}{@{}c@{\hspace{1ex}}c@{\hspace{1ex}}c@{\hspace{1ex}}c@{\hspace{1ex}}c@{\hspace{1ex}}c@{\hspace{1ex}}c@{\hspace{1ex}}c@{}}
	\multicolumn{8}{c}{\dotfill ml1m, deep network \dotfill}  \\
	\toprule
	\# epochs &  $0$ & $2$ & $4$ & $6$ & $10$ & $15$ & best\\
	\midrule
	 NCF train&  $2.29$ & $0.870$ & $0.834$ & $0.816$ & $0.790$ & $0.753$\\
	  NCF val &  $2.27$ & $0.885$ & $0.868$ & $0.859$ & $0.861$ & $0.865$ & $0.857$ \\
	 \midrule
	  InP-NCF train & $2.29$ & $0.865$ & $0.820$ & $0.811$ & $0.763$& $0.721$ \\
	 InP-NCF val & $2.27$ & $0.887$ & $0.860$ & $0.859$ & $0.852$ & $0.865$ & $0.852$\\
	\bottomrule
	\end{tabular}
\end{table}

\begin{table}
\caption{We report RMSE and precision of InP-NCF using end-to-end optimization and alternating optimization. The results seem very similar regardless of the optimization approach. We run each experiments three times and report mean and standard deviation.}
	\label{t:altopt}
	\centering
	\begin{tabular}{cccc}
	\toprule
	& \multicolumn{3}{c}{\dotfill RMSE  \dotfill} \\
	& ml100k & ml1m & Ichiba \\
	\midrule
	 end-to-end & $0.892  \pm  0.003$ & $0.845  \pm  0.001$ & $0.871  \pm  0.004$ \\
	 \midrule
	 alt-opt & $0.893  \pm  0.001$ & $0.845  \pm  0.000$ & $0.869 \pm  0.005$\\[1ex]
	 \end{tabular}
	 \begin{tabular}{cccc}
	\toprule
	& \multicolumn{3}{c}{\dotfill precision  \dotfill} \\
	& ml100k & ml1m & Ichiba \\
	\midrule
	 end-to-end & $69.8\%  \pm  0.5$ & $70.3\%  \pm  0.09$ & $79.8\%  \pm  0.3$ \\
	 \midrule
	 alt-opt & $69.9\%  \pm  0.5$ & $70.2\%  \pm  0.1$ & $79.5\%  \pm  0.2$\\
	 \bottomrule
	 \end{tabular}
	 \vspace{-2ex}
\end{table}

\vspace{-2ex}
\paragraph{Input learning with a shallow network outperforms fixed input with a deep network.} In Table~\ref{t:ml100L}, we report RMSE of InP-NCF and NCF using a shallow network ($\bg^{i},\bg^u,f = [1,1,2]$) and a deep network ($\bg^{i},\bg^u,f = [3,3,4]$). In the ml100k dataset, the results get slightly better as we make the networks deeper. In ml1m, which has more training data, the improvements are more significant than ml100k, as we make the networks deeper. In Ichiba, there is no improvement by making the network deeper. 

An important point about the results of Table~\ref{t:ml100L} is that our approach InP-NCF with a shallow network achieves better results than NCF (fixed input) with a deep network. For example, our method achieves RMSE of $0.845$ in ml1m dataset using the shallow architecture, while NCF with fixed input achieves RMSE of $0.856$ with the deep network. 

To find the reason, we report RMSE of both methods on the training and validation sets using the shallow and the deep network on ml1m dataset in Table~\ref{t:valtr}. The first six columns contain RMSE from initialization to epoch $15$ and the last column contains the best RMSE on the validation set. There are three important points in these results. First, we can see that even using a shallow network, overfitting happens after $10$ epochs. That is why the methods do not gain much improvement by increasing the number of layers (i.e., number of parameters). Second, at epoch $15$, the training error of InP-NCF using the shallow network is smaller than the training error of NCF using the deep network. This means that while the number of input variables is smaller than the number of model parameters in ml1m, input variables are more important in decreasing the error. Finally, the results of Table~\ref{t:valtr} confirm our theoretical results. In both cases of deep and shallow networks, InP-NCF achieves a lower empirical error and a better validation error than NCF.
\vspace{-2ex}
\paragraph{Optimization.} We optimize our InP-NCF approach in two different ways: 1) the end-to-end optimization, which is followed by a post-input optimization, and 2) alternating optimization over the inputs and parameters. Table~\ref{t:altopt} shows the results on three dataset. We can see that the RMSE and precision of InP-NCF is very similar regardless of the optimization approaches.

\vspace{-2ex}
\paragraph{Post-optimization of inputs.} In our experiments, we observed that model parameters and input parameters overfit at different rates. Based on this observation, we first optimize all the parameters together until we see overfitting. Then, as a second step, we fix the parameters of the model and continue optimization over the inputs for a few more epochs.

In Table~\ref{t:valtr}, we have shown the training and validation RMSE of InP-NCF for $15$ epochs. We continue the optimization only over the input parameters for a few more epochs and put the results in Table~\ref{t:post_inp}. As we can see, this strategy decreases both training and validation RMSEs.

\begin{table}
	\caption{Post-optimization over the input variables. We fix the model parameters and continue the optimization of the InP-NCF models of Table~\ref{t:valtr} over the inputs. }
	\vspace{0ex}
	\label{t:post_inp}
	\centering
	\begin{tabular}{@{}c@{\hspace{1.5ex}}c@{\hspace{1.5ex}}c@{\hspace{1.5ex}}c@{\hspace{1.5ex}}c@{\hspace{1.5ex}}c@{}}
	\multicolumn{6}{c}{\dotfill ml1m, shallow network \dotfill}  \\
	\toprule
	\# epochs & $0$ & $1$ & $3$  & $5$ & $7$\\
	\midrule
	  InP-NCF train & $0.791$  & $0.782$ & $0.775$ & $0.771$ & $0.767$ \\
	 InP-NCF val & $0.853$ &  $0.847$ & $0.846$ &  $0.845$ & $0.845$\\
	\bottomrule
	\end{tabular}
	\begin{tabular}{@{}c@{\hspace{1.5ex}}c@{\hspace{1.5ex}}c@{\hspace{1.5ex}}c@{\hspace{1.5ex}}c@{\hspace{1.5ex}}c@{}}
	\multicolumn{6}{c}{\dotfill ml1m, deep network \dotfill}  \\
	\toprule
	\# epochs &  $0$ & $1$ & $3$ & $5$ & $7$\\
	 \midrule
	  InP-NCF train & $0.761$ & $0.749$ & $0.742$ & $0.737$ & $0.733$ \\
	 InP-NCF val & $0.852$ & $0.8473$ & $0.8470$ & $0.875$ & $0.848$\\
	\bottomrule
	\end{tabular}
	\vspace{-2ex}
\end{table}

\vspace{-2ex}
\paragraph{Comparison with the state-of-the-art methods.}
We compare our method with several baselines and recent works by reporting RMSE and precision in Table~\ref{t:comp1} and Table~\ref{t:comp3}. MF \cite{Koren09} and NeuMF \cite{He17} are collaborating filtering (CF) methods with the ID as the input. MF uses dot product and NeuMF uses MLPs to model the interaction between the user and item representations and to predict the ratings. Autorec \cite{Sedhain15} and DeepCF \citep{Dong19} are CF methods with the ratings as the input. Autorec uses autoencoders to reconstruct the ratings and DeepCF uses the MLPs to model user-item interaction. DHA \cite{Li18} and aSDAE \cite{Dong17} are two autoencoder-based hybrid methods, which try to reconstruct the side information and the ratings of the users and items. HIRE \cite{Liu19} is a hybrid method that considers the hierarchical user and item side information. DSSM \cite{Huang13} is a content-based recommender method, which uses deep neural networks to learn the representations.

NeuMF, DeepCF, and DSSM were originally proposed for the prediction of implicit feedback. To make these methods applicable to the explicit feedback prediction, we modify their objective function and use a mean squared error loss function. For the hybrid and content-based methods, we create bag of words (BoW) features from the user and item side information, such as age, gender, occupation, movie title, genre, etc. DSSM is not applicable in the Amazon dataset, as there is no user side information.

We got an out-of-memory (OOM) error in the training of HIRE, DHA, and aSDAE in Amazon and Ichiba datasets. HIRE's code (publicly available by the authors) needs the dense interaction matrix in training, which can not be stored in memory for Amazon and Ichiba datasets. DHA and aSDAE reconstruct the users' and items' interaction vectors. This makes the first and the last layer of their autoencoders huge in Amazon and Ichiba datasets and makes their model's size larger than the $12GB$ memory of our GPU.

\begin{table}[t]
	\caption{We compare RMSE of our approach, InP-NCF, with the hybrid and collaborative filtering methods. We put "OM" in the tables whenever we get an out-of-memory error. Our approach outperforms the rest of the methods. }
	\label{t:comp1}
		\begin{center}
			\begin{tabular}[l]{ccccc} 
				\toprule
				 method & ml100k & ml1m  & Amazon & Ichiba\\
				\midrule
MF & $0.906 \pm 0.002$ & $0.857 \pm 0.0005$ & $1.107 \pm 0.003$ & $0.876 \pm 0.003$ \\
\midrule
Autorec & $0.900 \pm 0.002$ &   $0.855 \pm 0.0003$ & $1.108 \pm 0.01$ &  $0.876 \pm 0.059$  \\
\midrule
NeuMF  & $0.948 \pm 0.005$ &  $0.886 \pm 0.001$ & $1.140 \pm 0.004$ &  $0.900 \pm 0.004$ \\
\midrule
DSSM & $0.934 \pm 0.002$ & $0.941 \pm 0.0004$ & NA & $0.913 \pm 0.003$ \\
\midrule
DHA   &  $0.939 \pm 0.002$  &   $0.865 \pm 0.001$ &  OM &   OM \\
\midrule
aSDAE   &  $0.946 \pm 0.005$ & $0.879 \pm 0.005$ & OM  &  OM  \\
\midrule
HIRE   &  $0.930 \pm 0.006$ & $0.861 \pm 0.004$ &  OM & OM  \\
\midrule
DeepCF  &  $0.900 \pm 0.002$ & $0.866 \pm 0.001$ &  $1.123 \pm 0.002$ & $0.886 \pm 0.003$  \\
\midrule
InP-NCF & $\mathbf{0.892 \pm 0.004}$   &  $\mathbf{0.845 \pm 0.001}$ & $\mathbf{1.101 \pm 0.003}$  &  $\mathbf{0.867 \pm 0.003}$ \\
				\bottomrule
			\end{tabular}
		\end{center}
\end{table}

\begin{table}[!t]
\caption{We report precision of our approach and the state-of-the-art methods. We report the precision by creating the relevant and retrieved sets using top $10\%$ and $25\%$ of the items. We put "OM" in the tables whenever we get an out-of-memory error. Our approach outperforms the rest of the methods. }
\label{t:comp3}
\begin{center}
\begin{tabular}[c]{ccccc} 
\toprule
& \multicolumn{2}{c}{ml1m} & \multicolumn{2}{c}{ Amazon review} \\
 method & top $10\%$ & top $25\%$ & top $10\%$ & top $25\%$ \\
\midrule
MF & $55.6\% \pm 0.16$ & $68.05\% \pm 0.45$ & $64.9\% \pm 0.04$ & $71.5\% \pm 0.67$ \\
\midrule
Autorec & $57.6\% \pm 0.26$ & $69.5\% \pm 0.43$  & $64.6\% \pm 0.98$ & $71.3\% \pm 0.62$ \\
\midrule
NeuMF & $56.8\% \pm 0.12$ & $68.9\% \pm 0.48$  & $66.8\%  \pm 0.30 $ & $72.6\% \pm 0.09$ \\
\midrule
DSSM & $54.7\% \pm 0.35$ & $67.2\% \pm 0.30$ & NA & NA\\
\midrule
DHA  &  $57.4\% \pm 0.78$& $69.3\% \pm 0.23$  &  OM & OM\\
\midrule
aSDAE  & $56.4\% \pm 0.39$ & $68.0\% \pm 0.42$&  OM& OM \\
\midrule
HIRE   & $57.4\% \pm 0.08$ & $69.4\% \pm 0.55$&  OM & OM  \\
\midrule
DeepCF  & $57.1\% \pm 0.16$ & $69.2\% \pm 0.29$&  $67.7\% \pm 0.53$ & $73.9\% \pm 0.2$  \\
\midrule
InP-NCF & $\mathbf{58.6\% \pm 0.04}$  & $\mathbf{70.1\% \pm 0.38}$ &   $\mathbf{69.3\%  \pm 0.34}$& $\mathbf{75.5\% \pm 0.00}$ \\
\bottomrule
\end{tabular}
\end{center}
\end{table}

We compare the RMSE (Table~\ref{t:comp1}) and the precision (Table~\ref{t:comp3}) on four datasets. Our method achieves significantly better results than the state-of-the-art methods in all the experiments. These results confirm the advantage of simultaneous learning of the inputs and parameters in NCF.

\section{Conclusion}
In this paper, we studied the role of the input in neural collaborative filtering. We showed that there is an implicit user/item embedding matrix in the first hidden layer of the MLPs, which takes the interaction vector as the input. The non-zero elements of this vector determine which embedding vectors to choose. The output of the first hidden layer is a weighted sum of the embedding vectors, where the weights are determined by the value of the non-zero elements. We proposed to learn the value of the non-zero elements of the interaction vector with the network parameters jointly to achieve more powerful representations. We theoretically analyzed our approach and proved that it achieves a better generalization bound than previous works. Extensive experiments on predicting implicit and explicit feedback on several datasets verified the effectiveness of our approach.

{\small
\bibliographystyle{plainnat}

}

\newpage
\section{Supplementary material}
\label{s:sup}
This supplementary material contains the following: 1) proof of the lemmas in the main script, 2) the exact form of the generalization bound of InP-NCF and NCF, 3) analysis on applying our input learning approach to the NCF framework of CFNet \citep{Dong19}, 4) an extension of our experiments, and 5) the experimental settings of each method in our paper, which includes number of layers, learning rate, optimizer, etc.

\subsection{Proof of the lemmas}
\label{s:risk}

We assume the learning algorithms return the following parameters at the convergence: 
\begin{equation}
{\btheta^{ui}}^*,{\bZ^{u}}^{*},{\bZ^{i}}^{*} = \mathcal{A}_{\text{InP-NCF}}(\mathcal{D}), \qquad \qquad  \hat{\btheta^{ui}},\hat{\bZ^{u}},\hat{\bZ^{i}} = \mathcal{A}_{\text{NCF}}(\mathcal{D}).
\end{equation}

We first prove that the empirical risks of the NCF and InP-NCF at convergence is the same as their training errors. Then, we prove that InP-NCF's training error is smaller than the NCF's training error.

Let us assume that NCF converges to the network parameters $\hat{\btheta}=[\hat{\btheta^{ui}},\hat{\btheta^{u}},\hat{\btheta^{i}}]$, where $\hat{\bZ^{u}} = \bg^{u}(\bR;\hat{\btheta^{u}})$ and $\hat{\bZ^{i}} = \bg^{i}(\bR;\hat{\btheta^{i}})$. Also, assume that InP-NCF converges to network parameters $\btheta^*=[{\btheta^{ui}}^*,{\btheta^{u}}^*,{\btheta^{i}}^*]$ and input parameters [$\bU^*,\bV^*$], where ${\bZ^{u}}^{*} = \bg^{u}(\bU^*;{\btheta^{u}}^{*})$ and ${\bZ^{i}}^{*} = \bg^{i}(\bV^*;{\btheta^{i}}^{*})$.

\begin{lemma}
\label{l:emp_tr_supp}
At convergence, the training error of the learning algorithms is equal to the empirical error of their returned hypothesis:
\begin{equation}
R^{\text{emp}}_{\text{Inp-NCF}} = E_{\text{InP-NCF}}(\btheta^*,\bU^*,\bV^*), \quad  \quad R^{\text{emp}}_{\text{NCF}} = E_{\text{NCF}}(\hat{\btheta}).
\end{equation}
\end{lemma}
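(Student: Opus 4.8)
The plan is simply to unwind the definitions: I claim both equalities are term-by-term identities, because the empirical risk of Eq.~\eqref{e:emp_risk}, evaluated at the hypothesis the algorithm emits, and the training objective, evaluated at the converged parameters, are the \emph{same} $\frac{1}{mn}$-average of the \emph{same} loss $L$ over the \emph{same} predicted interactions $\hat{R}_{jk}$. So the only content is to check that the per-user and per-item representations entering $R_{emp}$ agree with those used during training.

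For the NCF side, $\mathcal{A}_{\text{NCF}}$ returns $(\hat{\btheta^{ui}},\hat{\bZ^{u}},\hat{\bZ^{i}})$ with $\hat{\bZ^{u}}=\bg^{u}(\bR;\hat{\btheta^{u}})$ and $\hat{\bZ^{i}}=\bg^{i}(\bR;\hat{\btheta^{i}})$, so its $j$th and $k$th rows are $\hat{\bZ^{u}}_{j,:}=\bg^{u}(\bR_{j,:};\hat{\btheta^{u}})$ and $\hat{\bZ^{i}}_{k,:}=\bg^{i}(\bR_{:,k};\hat{\btheta^{i}})$. I would plug this hypothesis into Eq.~\eqref{e:emp_risk}, taking $\bm()$ to be concatenation (the fusion function of NCF); since $\bI^{u}_j$ and $\bI^{i}_k$ are one-hot encodings, $\bI^{u}_j\hat{\bZ^{u}}=\hat{\bZ^{u}}_{j,:}$ and $\bI^{i}_k\hat{\bZ^{i}}=\hat{\bZ^{i}}_{k,:}$, hence $\hat{R}_{jk}=f([\hat{\bZ^{u}}_{j,:},\hat{\bZ^{i}}_{k,:}];\hat{\btheta^{ui}})$. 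This is exactly the prediction $f(\bz_{jk};\hat{\btheta^{ui}})$ appearing in $E_{\text{NCF}}$ of Eq.~\eqref{e:R} under the constraints $\bz^{u}_j=\bg^{u}(\bR_{j,:};\hat{\btheta^{u}})$, $\bz^{i}_k=\bg^{i}(\bR_{:,k};\hat{\btheta^{i}})$, $\bz_{jk}=[\bz^{u}_j,\bz^{i}_k]$. As the loss $L$, the index set $\{1,\dots,m\}\times\{1,\dots,n\}$, and the $\frac{1}{mn}$ normalization are the same in both, I conclude $R^{\text{emp}}_{\text{NCF}}=R_{emp}(\hat{\btheta^{ui}},\hat{\bZ^{u}},\hat{\bZ^{i}})=E_{\text{NCF}}(\hat{\btheta})$.

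The InP-NCF side is the same argument with the fixed input $\bR$ replaced by the learned inputs $\bU^*,\bV^*$: $\mathcal{A}_{\text{InP-NCF}}$ returns $({\btheta^{ui}}^*,{\bZ^{u}}^{*},{\bZ^{i}}^{*})$ with ${\bZ^{u}}^{*}=\bg^{u}(\bU^*;{\btheta^{u}}^{*})$, so its $j$th row is $\bg^{u}(\bU^*_{j,:};{\btheta^{u}}^{*})=\bg^{u}(\by^{u}_j;{\btheta^{u}_{L-1}}^{*})$ with $\by^{u}_j=\sigma(\sum_{l\in N_j}U^*_{jl}\bP^{u}_{l,:})$ --- precisely the representation $\bz^{u}_j$ used inside $E_{\text{InP-NCF}}$ of Eq.~\eqref{e:ours}, and symmetrically for items. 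Substituting into Eq.~\eqref{e:emp_risk} and again using the one-hot selection property, every summand matches and $R^{\text{emp}}_{\text{Inp-NCF}}=E_{\text{InP-NCF}}(\btheta^*,\bU^*,\bV^*)$.

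There is no real obstacle here; the one point that must be stated carefully is the identification of the two \emph{views} of a representation --- the hypothesis-class view of Eq.~\eqref{e:zmodel}, where $\bZ^{u}$ is an embedding matrix addressed by the one-hot vector $\bI^{u}_j$, versus the training view, where $\bz^{u}_j$ is the output of the MLP $\bg^{u}$ on user $j$'s input. The lemma is essentially the assertion that the algorithm's output convention is to store as $\bZ^{u},\bZ^{i}$ exactly the stacked MLP outputs evaluated on all users and items; once that is made explicit, no use of convergence, optimality, or any further property of $\mathcal{A}$ is needed, and the equalities hold identically.
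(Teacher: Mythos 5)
Your proof is correct and follows essentially the same route as the paper's: both arguments reduce to the one-hot selection identity $\bI^{u}_j\bZ^{u}=\bZ^{u}_{j,:}=\bg^{u}(\cdot_{j,:};\btheta^{u})$, which identifies the constraints of the empirical risk in Eq.~\eqref{e:emp_risk}/\eqref{e:zmodel} with those of the training objectives in Eq.~\eqref{e:R} and Eq.~\eqref{e:ours}, after which the two quantities are term-by-term identical. Your closing observation that convergence plays no real role (the equality is definitional once the output convention $\bZ^{u}=\bg^{u}(\text{inputs};\btheta^{u})$ is fixed) is accurate and, if anything, slightly more explicit than the paper's presentation.
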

\begin{proof}
We prove the lemma only for the InP-NCF, as the one for the NCF can be proved in a very similar way. Let us start by writing the empirical error of the InP-NCF using Eq.~\eqref{e:emp_risk} and Eq.~\eqref{e:zmodel}:
\begin{align}
\label{e:emp_exp_supp}
R^{\text{emp}}_{\text{InP-NCF}}  = R^{\text{emp}}( {\btheta^{ui}}^*,{\bZ^{u}}^*,{\bZ^{i}}^* )= \frac{1}{mn} \sum_{j=1}^{m}\sum_{k=1}^n  L(R_{jk},f(\bz_{jk};{\btheta^{ui}}^*)))\\ \nonumber
 \text{s.t.} \quad \bz_{jk} = [\bz^u_j,\bz^i_k]  \quad   \bz^{u}_j = \bI^{u}_j{\bZ^{u}}^* , \quad \bz^{i}_k = \bI^{i}_k{\bZ^{i}}^* 
\end{align}
The main term of the Eq.~\eqref{e:emp_exp_supp} is the same as the main term of the NCF objective in  Eq.~\eqref{e:R}, which is also the main term of the InP-NCF objective function in Eq.~\eqref{e:ours}. But, the constraints seem to be different. We rewrite the constraints of the  Eq.~\eqref{e:emp_exp_supp} as follows:
\begin{equation}
\label{e:nc}
\bz^{u}_j = \bI^{u}_j{\bZ^{u}}^* = \bI^{u}_j \bg^{u}(\bU^*;{\btheta^{u}}^*) = \bg^{u}({\bU}^{*}_{j,:};{\btheta^{u}}^*),  \quad \quad \bz^{i}_k = \bI^{i}_k{\bZ^{i}}^* =  \bI^{i}_k \bg^{i}(\bV^*;{\btheta^{i}}^*) = \bg^{i}(\bV^{*}_{k,:};{\btheta^{i}}^*).
\end{equation}
By replacing the constraints of the Eq.~\eqref{e:emp_exp_supp} with the ones in  Eq.~\eqref{e:nc}, we can see that both the main terms and constraints of the Eq.~\eqref{e:emp_exp_supp} and Eq.~\eqref{e:ours} are the same, which means that $R^{\text{emp}}_{\text{Inp-NCF}} = E_{\text{InP-NCF}}(\btheta^*,\bU^*,\bV^*)$ at convergence. We can prove this fact for NCF in the same way.
\end{proof}

Since the value of the training error is the same as the empirical error, we 
 focus on comparing the training errors of the InP-NCF and NCF, instead of the empirical errors. 
 
Let us assume that both InP-NCF and NCF converge to their optimal solutions. In other words, assume that $\hat{\btheta}$ is the optimal solution of NCF and [$\btheta^*,\bU^*,\bV^*$] is the optimal solution of InP-NCF. 

\begin{lemma}
\label{l:trerr_optimal}
At the optimal solution, training error  of InP-NCF is smaller than or equal to the  NCF, i.e., $E_{\text{InP-NCF}}(\btheta^*,\bU^*,\bV^*) \le E_{\text{NCF}}(\hat{\btheta})$.
\end{lemma}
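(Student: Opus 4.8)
The plan is to exhibit NCF as a restriction of InP-NCF and then invoke the elementary fact that minimizing over a larger feasible set can only lower the optimal value. First I would recall the observation already made below Eq.~\eqref{e:ours}: if the input matrices are pinned to the interaction values, $U_{jl}=R_{jl}$ for $l\in N_j$ and $V_{kl}=R_{lk}$ for $l\in N_k$ (all remaining entries being forced to $0$ in both models, which is consistent since $R_{jl}=0$ for $l\notin N_j$), then $\by^{u}_{j}=\sigma(\sum_{l\in N_j}R_{jl}\bP^{u}_{l,:})$ is exactly the output of the first layer of the NCF encoder $\bg^{u}$, and likewise $\by^{i}_{k}=\sigma(\sum_{l\in N_k}R_{lk}\bP^{i}_{l,:})$ is the output of the first layer of $\bg^{i}$. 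Applying the remaining $L-1$ layers (weights $\btheta^{u}_{L-1}$, $\btheta^{i}_{L-1}$) then reproduces the full NCF encoders $\bg^{u}(\cdot\,;\btheta^{u}_{L})$ and $\bg^{i}(\cdot\,;\btheta^{i}_{L})$, while the fusion step and the predictor $f(\cdot\,;\btheta^{ui})$ are identical in the two models; hence the InP-NCF objective of Eq.~\eqref{e:ours} evaluated at such a point equals the NCF objective of Eq.~\eqref{e:R}.

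Second I would turn this into an explicit embedding of parameter spaces. Given the NCF optimum $\hat{\btheta}=[\hat{\btheta}^{u}_{L},\hat{\btheta}^{i}_{L},\hat{\btheta}^{ui}]$, split each encoder's weights into (first-layer matrix, remaining weights), writing $\hat{\btheta}^{u}_{L}=(\tilde{\bP}^{u},\tilde{\btheta}^{u}_{L-1})$ and $\hat{\btheta}^{i}_{L}=(\tilde{\bP}^{i},\tilde{\btheta}^{i}_{L-1})$, and form the InP-NCF point $(\tilde{\btheta},\tilde{\bU},\tilde{\bV})$ with network weights $\tilde{\btheta}=[\tilde{\btheta}^{u}_{L-1},\tilde{\bP}^{u},\tilde{\btheta}^{i}_{L-1},\tilde{\bP}^{i},\hat{\btheta}^{ui}]$ and inputs $\tilde{U}_{jl}=R_{jl}$, $\tilde{V}_{kl}=R_{lk}$. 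By the previous paragraph, $E_{\text{InP-NCF}}(\tilde{\btheta},\tilde{\bU},\tilde{\bV})=E_{\text{NCF}}(\hat{\btheta})$. Since $(\btheta^*,\bU^*,\bV^*)$ is assumed to be a global minimizer of $E_{\text{InP-NCF}}$, we conclude
\[
E_{\text{InP-NCF}}(\btheta^*,\bU^*,\bV^*)\ \le\ E_{\text{InP-NCF}}(\tilde{\btheta},\tilde{\bU},\tilde{\bV})\ =\ E_{\text{NCF}}(\hat{\btheta}),
\]
which is exactly the claimed inequality.

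The only delicate point is the bookkeeping in the second step: one must check that the two descriptions of each encoder — NCF's "$L$-layer $\bg^{u}$" versus InP-NCF's "input $\bU$ multiplied by $\bP^{u}$, followed by the $L-1$ layers $\btheta^{u}_{L-1}$" — parameterize the same family of functions, so that the split of $\hat{\btheta}^{u}_{L}$ is well defined and the embedded point is genuinely feasible for InP-NCF. This is immediate once the architectures are aligned layer by layer (InP-NCF merely peels the first layer into the separate block $\bP^{u}$ while promoting the input to a free variable $\bU$), but it is worth stating because it is the one place where the two formulations could appear to differ. There is no optimization obstacle beyond this: the whole argument is monotonicity of the infimum under enlarging the feasible set, together with the exact equality of objective values on the embedded copy of NCF. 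The same reasoning, applied to the iterates of the alternating-optimization procedure rather than to the global optima, yields Lemma~\ref{l:trerr} in the main text, since initializing InP-NCF's inputs at $\bU=\bR$ and $\bV=\bR^{\top}$ starts it from NCF's solution and each subsequent alternating update can only decrease the objective.
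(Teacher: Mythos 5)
Your proposal is correct and follows essentially the same route as the paper's proof: the paper likewise observes that $E_{\text{NCF}}(\hat{\btheta}) = E_{\text{InP-NCF}}(\hat{\btheta},\bR,\bR^{T}) \ge E_{\text{InP-NCF}}(\btheta^{*},\bU^{*},\bV^{*})$, i.e., NCF is the restriction of InP-NCF to inputs pinned at the interaction values, so the global optimum over the larger feasible set can only be lower. Your extra bookkeeping about splitting $\hat{\btheta}^{u}_{L}$ into the first-layer matrix and the remaining weights makes explicit what the paper leaves implicit, and your closing remark about the alternating-optimization variant matches how the paper handles Lemma~\ref{l:trerr}.
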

\begin{proof}
NCF fixes the input to the ratings ($\bU = \bR$ and $\bV= \bR^T$) and learns the network parameters. Since [$\btheta^*,\bU^*,\bV^*$] is the optimal solution of InP-NCF, we have:
\begin{equation*}
\label{e:trproof_optimal}
E_{\text{NCF}}(\hat{\btheta}) = E_{\text{InP-NCF}}(\hat{\btheta},\bR,\bR^T) \ge E_{\text{InP-NCF}}(\btheta^{*},\bU^{*},\bV^{*}).
\end{equation*}
\end{proof}
 
In practice, the objective functions are highly non-convex and the models will not converge to the optimal solutions. So we have to compare the empirical risks of the models at their local minima. With our end-to-end optimization of the parameters and the inputs, we do not have enough information about the the convergence point of InP-NCF and how it is compared with the NCF. To ensure that InP-NCF achieves a lower training error than NCF, no matter how we choose the functions $f()$, $\bg^{u}()$ and $\bg^{i}()$, we have designed the following alternating optimization algorithm. 

We apply alternating optimization over the input parameters ($\bU$,$\bV$) and the network parameters $\btheta$ to minimize the objective function of Eq.~\eqref{e:ours}. We start the optimization from $\btheta$ since it is initialized randomly, as opposed to the input which can be initialized by the interaction vector. Formally, we initialize $\bU^{0} = \bR$ and $\bV^{0} = \bR^T$ and repeat the following steps until convergence ($t\ge 1$):
\begin{equation}
\label{e:altopt}
\text{(a):}\ \btheta^{t} = \argmin_{\btheta} E_{\text{InP-NCF}}(\btheta,\bU^{t-1},\bV^{t-1}), \quad
\text{(b):}\ \bU^{t},\bV^{t} = \argmin_{\bU,\bV} E_{\text{InP-NCF}}(\btheta^{t},\bU,\bV)  
\end{equation}

\begin{lemma}
\label{l:trerr_supp}
Training error  of InP-NCF in Eq.~\eqref{e:ours}, which uses alternating optimization of Eq.~\eqref{e:altopt}, is smaller than or equal to the  NCF in Eq.~\eqref{e:R}, i.e., $E_{\text{InP-NCF}}(\btheta^*,\bU^*,\bV^*) \le E_{\text{NCF}}(\hat{\btheta})$.
\end{lemma}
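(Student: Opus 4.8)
The plan is to exploit two facts: (i) when the inputs are frozen at the interaction matrix, InP-NCF's objective \eqref{e:ours} collapses onto NCF's objective \eqref{e:R}; and (ii) every half-step of the alternating scheme \eqref{e:altopt} is an exact block minimization, hence never increases the objective. For (i): substituting $U_{jl}=R_{jl}$ for $l\in N_j$ and $V_{kl}=R_{lk}$ for $l\in N_k$ into \eqref{e:ours} turns $\by^u_j$ and $\by^i_k$ into the first-layer outputs $\sigma(\bR_{j,:}\bP^u)$ and $\sigma(\bR_{:,k}^{\top}\bP^i)$, so $E_{\text{InP-NCF}}(\btheta,\bR,\bR^{\top})=E_{\text{NCF}}(\btheta)$ for every $\btheta$ (this is already noted right after \eqref{e:ours}). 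Since \eqref{e:altopt} is initialized at $\bU^0=\bR$, $\bV^0=\bR^{\top}$, its first half-step (a) solves exactly $\btheta^1=\argmin_{\btheta}E_{\text{InP-NCF}}(\btheta,\bU^0,\bV^0)=\argmin_{\btheta}E_{\text{NCF}}(\btheta)$, i.e.\ the very optimization problem that $\mathcal{A}_{\text{NCF}}$ runs; running the same network-block optimizer from the same initialization, it returns a point with $E_{\text{InP-NCF}}(\btheta^1,\bU^0,\bV^0)=E_{\text{NCF}}(\hat{\btheta})$ (and with exact $\argmin$, even $\le E_{\text{NCF}}(\hat{\btheta})$, since $\hat{\btheta}$ need only be a local optimum).

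Next I would chain the monotonicity of the remaining updates. Each step (b) is an exact minimization over $(\bU,\bV)$ with $\btheta$ fixed and the previous iterate feasible, and each step (a) is an exact minimization over $\btheta$ with $(\bU,\bV)$ fixed, so
\[
E_{\text{InP-NCF}}(\btheta^1,\bU^0,\bV^0)\ \ge\ E_{\text{InP-NCF}}(\btheta^1,\bU^1,\bV^1)\ \ge\ E_{\text{InP-NCF}}(\btheta^2,\bU^1,\bV^1)\ \ge\ \cdots
\]
Hence the objective values along the run are non-increasing; since both admissible losses $L_{\text{reg}}$ and $L_{\text{bce}}$ are nonnegative, they are bounded below and the sequence converges. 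Writing $E_{\text{InP-NCF}}(\btheta^*,\bU^*,\bV^*)$ for the limit, we obtain $E_{\text{InP-NCF}}(\btheta^*,\bU^*,\bV^*)\le E_{\text{InP-NCF}}(\btheta^1,\bU^0,\bV^0)=E_{\text{NCF}}(\hat{\btheta})$, which is the claim. (One may view this as the non-convex counterpart of Lemma~\ref{l:trerr_optimal}: there global optimality of $[\btheta^*,\bU^*,\bV^*]$ did the work; here the careful initialization of the alternating scheme plays that role.)

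The step I expect to be the main obstacle is the identification in the first paragraph — namely the assertion that InP-NCF's first $\btheta$-subproblem attains the same training value as NCF's run. Read with $\argmin$ as exact (global) minimization this is immediate, but for the practical reading it requires treating "both algorithms apply the same optimizer with the same initialization to the network block" as part of the hypothesis; this is natural, because \eqref{e:altopt}(a) at $t=1$ is literally NCF's training problem, but it should be stated. If one is unwilling to assume this, the honest conclusion is weaker but still sufficient downstream: there exists a run of the alternating optimization whose value is $\le E_{\text{NCF}}(\hat{\btheta})$, and this is all that Lemma~\ref{l:emperr} and the main theorem use. A secondary, routine point is the convergence of the alternating iterates, which follows from the monotone non-increase together with $L\ge 0$; strictly, only the inequality with $E_{\text{InP-NCF}}(\btheta^1,\bU^0,\bV^0)$ on the right is needed, so no appeal to a limit is essential.
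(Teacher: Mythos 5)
Your proposal is correct and follows essentially the same route as the paper's own proof: identify step (a) of the alternating scheme at $t=1$ (with $\bU^0=\bR$, $\bV^0=\bR^{\top}$) with NCF's training problem, so that $E_{\text{NCF}}(\hat{\btheta}) = E_{\text{InP-NCF}}(\btheta^1,\bU^0,\bV^0)$, and then chain the monotone non-increase of the objective over subsequent alternating steps down to $E_{\text{InP-NCF}}(\btheta^*,\bU^*,\bV^*)$. Your added caveat --- that equating the first subproblem's value with NCF's training value implicitly assumes both run the same optimizer from the same initialization (or exact block minimization) --- is a fair point that the paper glosses over, but it does not change the argument.
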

\begin{proof}
NCF fixes the input to the ratings ($\bU = \bR$ and $\bV= \bR^T$) and learns the network parameters. This is equivalent to step (a) in Eq.~\eqref{e:altopt} at the first iteration ($t=1$) of InP-NCF. In other words, the training error of NCF can be written as $E_{\text{NCF}}(\hat{\btheta}) = E_{\text{InP-NCF}}(\btheta^{1},\bU^{0},\bV^{0})$. InP-NCF decreases the training error at each step of Eq.~\eqref{e:altopt}, so we have:
\begin{equation}
\label{e:trproof}
E_{\text{NCF}}(\hat{\btheta}) = E_{\text{InP-NCF}}(\btheta^{1},\bU^{0},\bV^{0})  \ge  E_{\text{InP-NCF}}(\btheta^{1},\bU^{1},\bV^{1}) \ge  \dots \ge E_{\text{InP-NCF}}(\btheta^{*},\bU^{*},\bV^{*}).
\end{equation}
\end{proof}
Note that we only need the alternating optimization of InP-NCF for our theoretical analysis. In our experiments, we will show that the performance of the end-to-end optimization and alternating optimization of InP-NCF generate similar results.

\begin{lemma}
\label{l:emperr_supp}
At convergence, empirical error of the hypothesis returned by the InP-NCF is smaller than or equal to the one returned by NCF:
\begin{equation}
R^{\text{emp}}_{\text{Inp-NCF}} \le R^{\text{emp}}_{\text{NCF}} 
\end{equation}
\end{lemma}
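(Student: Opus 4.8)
The statement is an immediate corollary of the two facts already established, so the plan is simply to chain them. I would combine Lemma~\ref{l:emp_tr_supp}, which identifies each learning algorithm's empirical risk with its converged training objective, and Lemma~\ref{l:trerr_supp}, which orders those two training objectives, to obtain
\begin{equation*}
R^{\text{emp}}_{\text{Inp-NCF}} = E_{\text{InP-NCF}}(\btheta^*,\bU^*,\bV^*) \le E_{\text{NCF}}(\hat{\btheta}) = R^{\text{emp}}_{\text{NCF}},
\end{equation*}
where the left and right equalities are Lemma~\ref{l:emp_tr_supp} applied to InP-NCF and to NCF respectively, and the middle inequality is Lemma~\ref{l:trerr_supp}.

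The only genuine content is recalling what those two ingredients say. For the equalities I would note that a learning algorithm's empirical risk is evaluated on the returned hypothesis $(\btheta^{ui},\bZ^u,\bZ^i)$ inside the ID-based class of Eq.~\eqref{e:zmodel}, whereas $E_{\text{InP-NCF}}$ is phrased in terms of the inputs $\bU,\bV$ and the MLPs $\bg^u,\bg^i$; Lemma~\ref{l:emp_tr_supp} reconciles the two by observing that the returned representation matrices are exactly ${\bZ^u}^* = \bg^u(\bU^*;{\btheta^u}^*)$ and ${\bZ^i}^* = \bg^i(\bV^*;{\btheta^i}^*)$, row by row, so picking out row $j$ with $\bI^u_j$ reproduces the per-user representation of Eq.~\eqref{e:ours}. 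For the inequality I would recall that NCF is the frozen-input restriction $\bU=\bR,\ \bV=\bR^T$ of InP-NCF, hence its converged objective equals the value of $E_{\text{InP-NCF}}$ after the first $\btheta$-update of the alternating scheme in Eq.~\eqref{e:altopt}, and the subsequent updates only decrease it, so the converged InP-NCF objective is at most that value. Substituting the two equalities into this chain gives the claim.

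The main obstacle here is not an inequality manipulation but a consistency check: one must be careful that ``at convergence'' in Lemma~\ref{l:trerr_supp} refers to the alternating optimization of Eq.~\eqref{e:altopt} rather than the end-to-end variant, because only the alternating scheme is guaranteed to be initialized at NCF's solution and to be monotone thereafter. I would therefore state explicitly that the comparison is between the hypothesis produced by $\mathcal{A}_{\text{NCF}}$ and the one produced by the alternating-optimization instance of $\mathcal{A}_{\text{InP-NCF}}$, and remark (as the paper does for Table~\ref{t:altopt}) that the end-to-end variant behaves the same empirically but is not covered by this argument. With that caveat recorded, the three-line chain above is the entire proof.
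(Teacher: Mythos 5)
Your proposal is correct and matches the paper's own proof exactly: both chain Lemma~\ref{l:emp_tr_supp} (empirical risk equals converged training error) with Lemma~\ref{l:trerr_supp} (the alternating-optimization ordering of training errors) to conclude $R^{\text{emp}}_{\text{Inp-NCF}} \le R^{\text{emp}}_{\text{NCF}}$. Your explicit caveat that the inequality is only guaranteed for the alternating-optimization instance of $\mathcal{A}_{\text{InP-NCF}}$ is a welcome clarification that the paper itself only makes in passing.
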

\begin{proof}
The lemma holds because 1) in Lemma~\ref{l:emp_tr_supp} we proved that the value of the empirical error is the same as the training error, and 2) in Lemma~\ref{l:trerr_supp} we proved that training error of the InP-NCF is smaller than the training error of the NCF.
\end{proof}

\subsection{Exact form of the generalization bound of InP-NCF and NCF}
As mentioned in the main script, the prediction model of the latent factor-based collaborative filtering methods (using concatenation to combine the representations and an MLP $f()$ to predict the interaction) can be written as: 
\begin{equation}
	\label{e:zmodel_supp}
	\hat{R}_{jk} =  f(\bz_{jk};\btheta^{ui}) \quad \text{s.t.} \quad
	\bz_{jk}=[\bz^u_j,\bz^i_k], \   \bz^{u}_j = \bI^{u}_j\bZ^{u}, \ \bz^{i}_k = \bI^{i}_k\bZ^{i},
\end{equation}
where $ \bI^{u}_j$ and $\bI^{i}_k$ are the one-hot encoding vectors of the $j$th user and $k$th item. The input to this model is the user and item IDs and the parameters are $\btheta^{ui}$, $\bZ^u$, and $\bZ^i$.

Here, the goal is to write the exact form of the generalization bound for our model in Eq.~\eqref{e:zmodel_supp}. To achieve this, we make a few assumptions. First, let us assume that the joint representation $\bz^{ui}$ is achieved by the summation of the user and item representations. In that case, we can compute $\bz^{ui}$ by as single vector-matrix multiplication:
\begin{equation}
	\bz^{ui}_{jk} = \bI^{ui}_{jk} \bZ^{ui} \quad \text{s.t.} \quad  \bI^{ui}_{jk}  = [\bI^{u}_j, \bI^{i}_k], \ \bZ^{ui} = [\bZ^{u},\bZ^{i}],
\end{equation}
where $\bZ^{ui}$ is achieved by appending $\bZ^{i}$ to the end of $\bZ^{u}$. 

The second assumption changes the objective function of NCF from regression or binary cross-entropy losses to the hinge loss. Let us assume there are $p$ unique interaction values, i.e., $R_{jk} \in\{1,2,\dots,p\}$. In case of implicit feedback, we have $p=2$. In case of explicit feedback, $p$ is the number of unique rating values. We define $\bbf()$ as an MLP with $L_f$ hidden layers and $d$ units per layer, which takes $\bz_{jk}$ as the input and generates a $p$-dimensional vector as the output. The output of $\bbf()$ is a score for each of the interaction values, where the interaction with the maximum score will be selected as the predicted interaction. We define the objective function using the hinge loss with margin of $\gamma$ as follows:
\begin{multline}
E_{\text{NCF}}(\bz_{jk},\btheta^{ui}) = \frac{1}{N} \sum_{j=1}^{m}\sum_{k=1}^n \ind(R_{jk}>0)
 \max(0,\gamma+\max_{y \ne R_{jk}}(\bbf(\bz_{jk};\btheta^{ui})[y]) - \bbf(\bz_{jk};\btheta^{ui})[R_{jk}]),
\end{multline}
where $N$ is the number of non-zero elements of the interaction matrix $\bR$ and $\bbf(\bz_{jk};\btheta^{ui})[y]$ returns the value of the $y$th index  of $\bbf(\bz_{jk};\btheta^{ui})$. By setting $\gamma=0$, the hinge loss counts the number of misclassified interactions. 

Based on the above assumptions, the number of learnable parameters in the model of Eq.~\eqref{e:zmodel_supp} is 
\begin{equation}
\label{e:nump1}
	\pi = (n+m+p)\times d + (L_{f}-1)(d\times d),
\end{equation}
where $p$ is the number of unique interaction values and $d$ is the dimension of the representations in $\bZ^u$ and $\bZ^{i}$ and the number of hidden units in each layer of the MLP $\bbf()$.

Note that the learnable parameters include $\bZ^u \in \bbR^{m \times d}$, $\bZ^{i}\in \bbR^{n \times d}$, and the weights of the MLP $\bbf()$, which contains $d\times d$ weight matrix in each of the first $L_{f}$ layers and $p\times d$ weight matrix in the last (output) layer.

Because of the above assumptions, the rating of the user $j$ on item $k$ can be estimated by a feedforward neural network, where the input is $\bI^{ui}_{jk}$, the weight matrix of the first layer is $\bZ^{ui}$, and the layers of $f()$ construct the second to the last layers. \citet{Bartlett19} gives the following generalization bound for our feedforward network with ReLU activation functions:
\begin{equation}
\label{e:bound1}
E^{exp} \le E^{tr} + \tilde{O}(\sqrt{\frac{(L_{f}+1)\pi}{N}}),
\end{equation}
where $\tilde{O}$ is the upper bound of the complexity up to a logarithmic factor, $\pi$ is the number of parameters defined in Eq.~\eqref{e:nump1}, and $N$ is the number of samples (non-zero interactions).

The bound in Eq.~\eqref{e:bound1} is linear in the number of parameters of the neural network, which becomes weak as the network becomes deep. This bound explains why InP-NCF outperforms NCF using shallow neural networks, which have one or two hidden layers and a few hundred units per layer.  For deeper networks, we could use the bound proposed by \citet{Neyshabur18} to show that InP-NCF has a smaller upper bound for generalization error than NCF. This bound depends on the norm of the weights of network, which makes it more suitable for deeper networks.

\subsection{Input learning with the NCF framework of CFNet \citep{Dong19}}
Let us first review the NCF framework of the CFNet \cite{Dong19} and see how it predicts the implicit feedback. CFNet learns two types of joint user-item representations. The first one is achieved by applying MLPs to the user and item representations and computing their element-wise product:
\begin{equation}
\label{e:cf1}
\bz_{jk}^{(1)} = \bz^{u}_j \odot \bz^{i}_k \quad \text{s.t.}  \quad
\bz^{u}_j = \bg^{u}(\bR_{j,:};\btheta^{u}_L), \quad \bz^{i}_k = \bg^{i}(\bR_{:,k};\btheta^{i}_L),
\end{equation}
where $\bg^{u}$ and $\bg^{i}$ are user and item MLPs, and $\odot$ is an element-wise product.

The second joint representation is generated in three steps: 1) computing linear transformation of the user and item interaction vectors, 2) concatenating the two output vectors, and 3) applying an MLP to the concatenation:
\begin{equation}
\label{e:cf2}
\bz_{jk}^{(2)} = \bh (\bv_{jk};\btheta^{h}_L) \quad \text{s.t.} \quad \bv_{jk} = [\bv_j,\bv_k], \quad \bv_j = \bR_{j,:} \bA^u, \quad \bv_k = \bR_{:,k}^T\bA^{i}
\end{equation}
where $\bA^u \in \bbR^{m\times d}$ and $\bA^i \in \bbR^{n\times d}$ are used for linear transformation and $\bh()$ is an MLP.

The final objective function can be written as:
\begin{equation}
\label{e:cfnet}
E_{\text{CFNet}}(\btheta^{u}_L,\btheta^{i}_L,\bA^u,\bA^i,\btheta^{h},\btheta^{ui})= \frac{1}{mn} \sum_{j=1}^{m}\sum_{k=1}^n  L(R_{jk},f([\bz_{jk}^{(1)},\bz_{jk}^{(2)}];\btheta^{ui})),
\end{equation}
where $\bz_{jk}^{(1)}$ and $\bz_{jk}^{(2)}$ are defined in Eq.~\eqref{e:cf1} and Eq.~\eqref{e:cf2}. In CFNet, the MLP $\bbf()$ contains a single nonlinear layer. 

\subsubsection{Input learning in CFNet} 
We define the same notations as the ones in our main script.  For the $j$th user, we define the set $N_j=\{l|R_{jl} \ne 0, 1 \le l \le n\}$, which contains the index of all the items interacted by the $j$th user. We define the set of user input vectors as $\bU \in \bbR^{m\times n}$, where $\bU_{j,:}$ is the input vector of the $j$th user. We learn $U_{jl}$ if $l \in N_j$ and fix it to $0$ otherwise.

For the $k$th item, we define the set $N_k=\{l|R_{lk} \ne 0, 1 \le l \le m\}$, which contains the index of all the users who interacted with the $k$th item. We also define the set of item input vectors as $\bV \in \bbR^{n \times m}$. We learn $V_{kl}$ if $l \in N_k$ and fix it to $0$ otherwise. 

We consider $\bP^{u}$ and $\bP^{i}$ as the weights of the first layer of the MLPs $\bg^{u}$ and $\bg^{i}$, respectively. Our objective function is defined as follows:
\begin{equation}
	\label{e:ours_cfnet}
	\begin{split}
	&E_{\text{InP-CFNet}}(\btheta,\bU,\bV) =  E_{\text{CFNet}}(\btheta^{u}_L,\btheta^{i}_L,\bA^u,\bA^i,\btheta^{h},\btheta^{ui}) \quad \text{s.t.} \\
	 &\bz^{u}_j = \bg^{u}(\by^{u}_j;\btheta^{u}_{L-1}), \quad
	\by^{u}_{j}=\sigma(\sum_{l \in N_j}  U_{jl} \bP^{u}_{l,:}), \quad \bv^{u}_{j}=\sigma(\sum_{l \in N_j}  U_{jl} \bA^{u}_{l,:}),\\
		&\bz^{i}_k = \bg^{i}(\by^{i}_k;\btheta^{i}_{L-1}), \quad \by^{i}_{k}=\sigma(\sum_{l \in N_k} V_{kl} \bP^{i}_{l,:}), \quad \bv^{i}_{k}=\sigma(\sum_{l \in N_k} V_{kl} \bA^{i}_{l,:})
		 \end{split}
\end{equation}
where $\btheta=[\btheta^{u}_{L-1},\bP^{u},\btheta^{i}_{L-1},\bP^{i},\bA^{u},\bA^{i},\btheta^{h},\btheta^{ui}]$ contains all the parameters of the CFNet network, $\bP^{u}$ and $\bP^{i}$ contain the weights of the first layer of the $\bg^{u}$ and $\bg^{i}$, respectively, and $\by^{u}_j$ and $\by^{i}_k$ are the output of the first layer of the $\bg^{u}()$ and $\bg^{i}()$, respectively. By setting $\smash{U_{jl} = R_{jl}}$ and $\smash{V_{kl}=R_{lk}}$, our objective in Eq.~\eqref{e:ours_cfnet} becomes equivalent to the previous CFNet's objective function in Eq.~\eqref{e:cfnet}.

\subsubsection{Theoretical analysis of InP-CFNet}
Similar to our main theorem in the main script, we can prove that InP-CFNet achieves a smaller generalization bound than CFNet. Since the details are very similar to our proof in the main script, and to avoid repetition, we review the main ideas.

 As mentioned in our main script, collaborative filtering (matrix completion) methods, including ours, assume that the model has seen all users and items and a small subset of their interactions in the training set. So, the model for both CFNet and InP-CFNet can be written as:
 \begin{equation}
 \label{e:test_cfnet}
 \begin{split}
&\hat{R}_{jk} = f([\bz_{jk}^{(1)},\bz_{jk}^{(2)}];\btheta^{ui}) \\
&\bz_{jk}^{(1)} = \bz^{u}_j \odot \bz^{i}_k \quad \text{s.t.}  \quad
\bz^{u}_j = \bI^{u}_{j}\bZ^{u}, \quad \bz^{i}_k = \bI^{i}_k \bZ^{i}\\
&\bz_{jk}^{(2)} = \bh (\bv_{jk};\btheta^{h}_L) \quad \text{s.t.} \quad \bv_{jk} = [\bv_j,\bv_k], \quad \bv_j = \bI^{u}_j\bV^{u}, \quad \bv_k = \bI^{i}_k\bV^{i}
\end{split}
 \end{equation}
The inputs of the above model are the user and item IDs, and the parameters are $\btheta^{ui}$, $\btheta^{h}_L$, $\bZ^{u}$, $\bZ^{i}$, $\bV^{u}$, and $\bV^{i}$. 

The key point here is that InP-CFNet and CFNet are both learning algorithms, which train their own objective function and return the parameters of the model in Eq.~\eqref{e:test_cfnet}. Similar to our theorem for the NCF and InP-NCF, the learning algorithm with the smaller empirical error gives a better bound. We can prove the the following facts in the same way we did for NCF and InP-NCF:
\begin{itemize}
\item At convergence, the empirical risk of the InP-CFNet and CFNet are the same as their training errors.
\item At convergence, InP-CFNet has a smaller training error than CFNet: $E^{\text{tr}}_{\text{Inp-CFNet}} \le E^{\text{tr}}_{\text{CFNet}}$.  We can  prove this using an alternating optimization over the inputs and network parameters of the InP-CFNet and initializing the inputs using the interaction values in the first step.
\end{itemize}

Since the InP-CFNet achieves a lower training error than CFNet, InP-CFNet has a smaller generalization bound than CFNet.

\begin{figure}[!t]
  \centering
  \begin{tabular}{@{}c@{\hspace{2ex}}c@{}}
  Initilization & After learning \\
	\includegraphics*[width=0.4\linewidth]{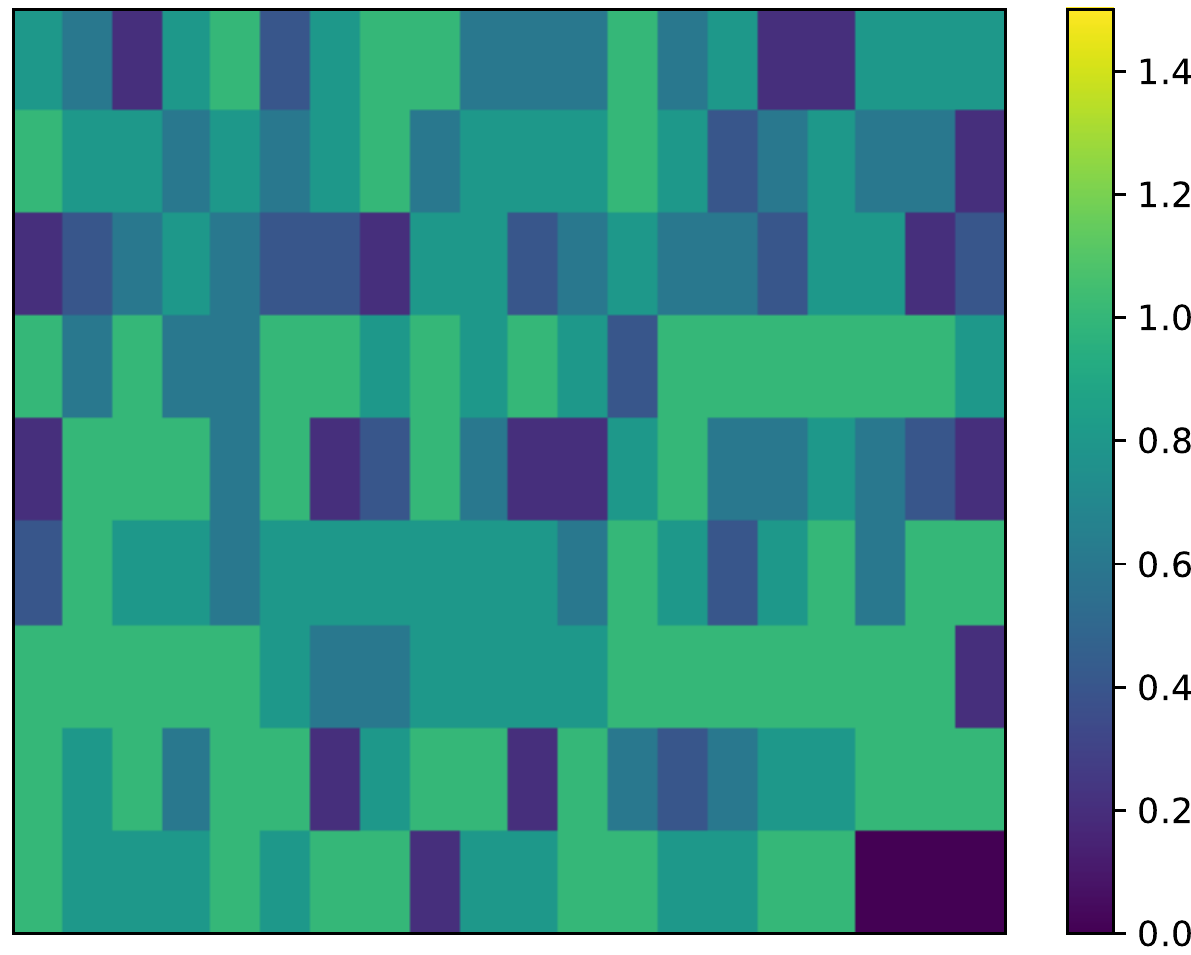}&
	 \includegraphics*[width=0.4\linewidth]{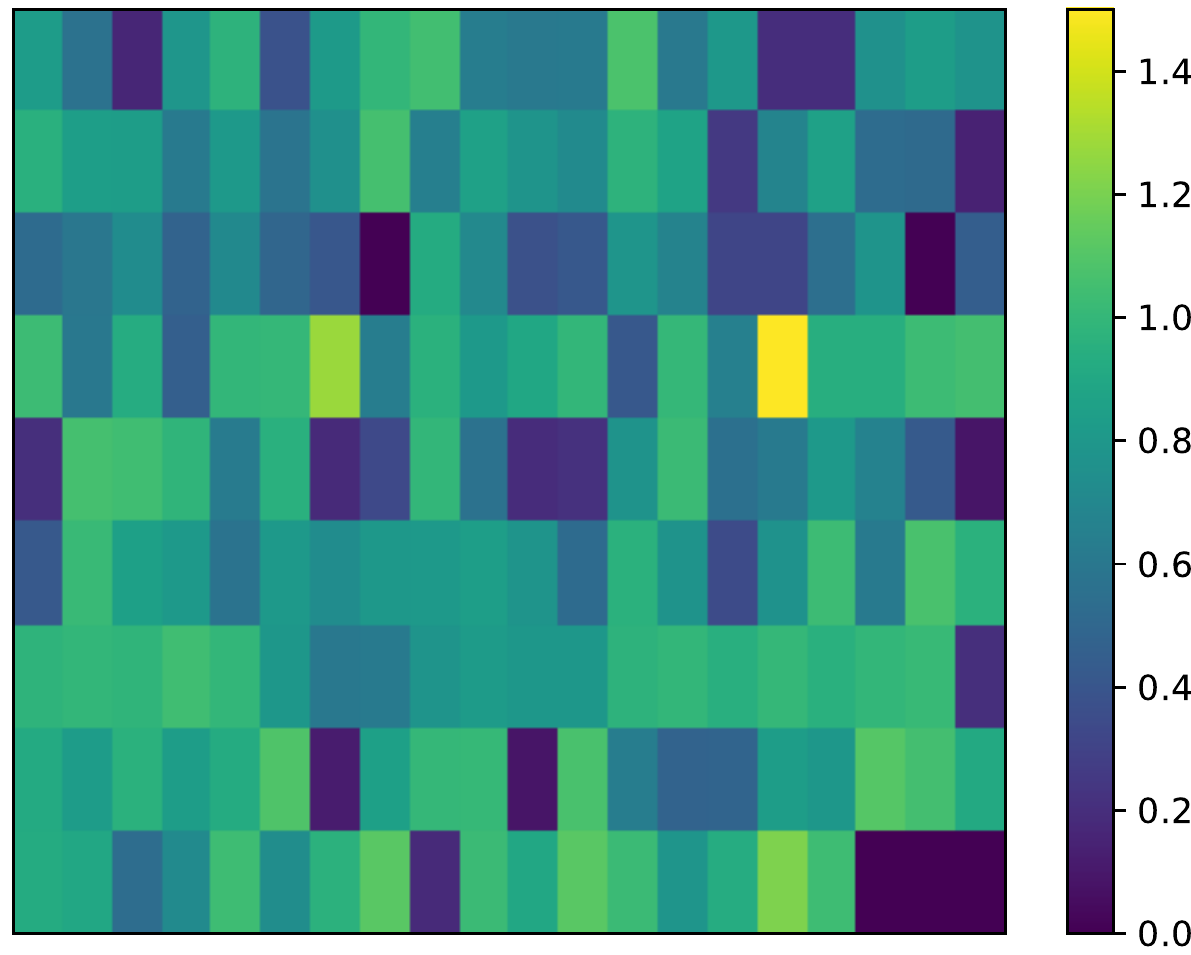}\\
  \end{tabular}
    \vspace{-2ex}
  \caption{Visualization of the input vectors of $10$ users in ml100k dataset. Each row corresponds to the ratings of one user.}
  \vspace{-3ex}
\label{f:visual}
\end{figure}

\subsection{Additional experimental results}
\subsubsection{Visualization of the values of the input vectors.} In Fig.~\ref{f:visual}, we visualize the values of the input vectors at the initialization (left panel) and after learning them with our approach (right panel), in ml100k dataset. We randomly select $10$ users and extract their ratings on $20$ items. 

Each row of the two matrices in Fig.~\ref{f:visual} corresponds to one user. At the initialization, the input vectors contain the ratings of the users on items. There are $5$ unique ratings (from $1$ to $5$) in ml100k dataset. We divide the ratings by $5$, so the input values are between $0.2$ to $1$. Note that there are a few $0$s at the initialization. That's because some of the selected users have rated less than $20$ items.

At the right panel of Fig.~\ref{f:visual}, we show the input values of those $10$ users after learning the input with the InP-NCF. We can see that there are many more unique values. Also, sometimes the input values change significantly from their initial values. To see this better, note that the range of input values at initialization is between $0$ and $1$. When we learn the input, some values will change to $1.4$ (look at the yellow colors in the right panel). We have set the range of the colorbar from $0$ to $1.4$ in both panels of the Fig~\ref{f:visual} to keep the colors consistent and to make the comparison easier.

\subsubsection{Evaluation metrics.}
We report the root mean square error (RMSE) and average precision to evaluate the rating prediction performance. RMSE is defined as:
\begin{equation}
\text{RMSE} =\textstyle \sqrt{\frac{1}{|T|} \sum_{R_{jk}\in T} (R_{jk} - \hat{R}_{jk})^2}
\end{equation}
where set $T$ contains all the ratings in the test set and $R_{jk} \in T$ and $\hat{R}_{jk}$ are the actual rating and predicted rating of the user $j$ on item $k$, respectively. 

To compute precision, we define set $S_j$ as the set of items rated by the user $j$, which are not included in the training set. For user $j$, the sets $relevant_j$ (groundtruth) and $retrieved_j$ are defined as the top $p\%$ of the items in $S_j$ with the highest ratings and the highest predicted ratings, respectively. The average precision is computed as:
\begin{equation}
\text{precision} = \frac{1}{m}\sum_{j=1}^m \frac{|relevant_j| \cap |retrieved_j|}{|retrieved_j|}.
\end{equation}

We report Normalized Discounted Cumulative Gain (NDCG) and Hit Ratio (HR) to evaluate the implicit feedback prediction performance. Note that for each user, we have only one relevant item in the test set, which is the last interacted item. For each user $j$, we call this relevant item $relevant_j$. We follow the same protocols as \citep{Dong19, He17} and truncate the rank list at $10$ for both metrics. We call this list $retrieved_j$. HR measures whether the actual test item exists in the top-ranked list:
\begin{equation}
	\text{HR} = \frac{1}{m}\sum_{j=1}^m \ind(relevant_j \in retrieved_j),
\end{equation}
where $\ind(arg)=1$ if $arg$ is true, and $\ind(arg)=0$ otherwise.

NDCG is a measure of ranking quality and gives higher scores to the hits at top positions in the ranked list. Let us define $index_j$ as position of the $relevant_j$ in the ranked list $retrieved_j$. Then, NDCG is defined as:
\begin{equation}
	\text{NDCG} = \frac{1}{m}\sum_{j=1}^m \ind(relevant_j \in retrieved_j) \frac{1}{\log2(index_j+1)}.
\end{equation}

\subsubsection{Learning the input improves the prediction performance.}  In Table~\ref{t:ml100L} of section~\ref{s:exp}, we compare RMSE of InP-NCF with NCF on ml100k, ml1m, and Ichiba datasets. Here, in Table~\ref{t:inplearn}, we have an extension of those experiments, where we report the mean and standard deviation of precision and RMSE on neural networks with different number of layers. The top, middle, and bottom sections of the Table~\ref{t:inplearn}  contains results of the ml100k, ml1m, and Ichiba datasets, respectively. The notation $\bg^{u},\bg^{i},f =[a,b,c]$ means that the MLPs $\bg^{u}()$, $\bg^{i}()$, and $f()$ have $a$, $b$, and $c$ hidden layers. In other words, the neural networks become deeper from left column to the right column of Table~\ref{t:inplearn}.

By comparing NCF (R) with InP-NCF (R) and NCF (I) with InP-NCF (I), we can see that learning the input consistently improves the results and gives lower RMSE and higher precision.
Additionally, in almost all cases, our approach InP-NCF with a shallow network achieves better results than NCF (fixed input) with a deep network.

\begin{table*}
	\caption{Extension of the experiments reported in Table~\ref{t:ml100L} of section~\ref{s:exp}. We compare our approach InP-NCF with NCF (fixed inputs) in different datasets. Our method consistently achieves better results.}
	\label{t:inplearn}
	\centering
	\begin{tabular}{cccc}
	\multicolumn{4}{c}{\dotfill ml100k  \dotfill} \\
	\toprule
	\# layers & $\bg^{u},\bg^{i},f =[1,1,2]$ &$\bg^{u},\bg^{i},f =[2,2,3]$ & $\bg^{u},\bg^{i},f =[3,3,4]$ \\
	& \multicolumn{3}{c}{\dotfill RMSE  \dotfill} \\
	\midrule
	 NCF (R) & $0.903  \pm  0.003$ & $0.900  \pm  0.004$ & $0.897  \pm  0.002$ \\
	 \midrule
	 InP-NCF (R) & $\mathbf{0.892  \pm  0.003}$ & $\mathbf{0.892  \pm  0.002}$ & $\mathbf{0.894 \pm  0.001}$ \\
	 \midrule
	 NCF (I) & $0.905 \pm 0.004$ & $0.900 \pm 0.003$ & $0.896 \pm 0.002$ \\ 
	 \midrule
	 InP-NCF (I)& $\mathbf{0.892 \pm 0.004}$ & $\mathbf{0.894 \pm 0.002}$ & $\mathbf{0.895 \pm 0.002}$ \\
	 \bottomrule
	 \\
	 & \multicolumn{3}{c}{\dotfill precision  \dotfill} \\
	 \midrule
	NCF (R) &  $69.8\% \pm 0.7$ & $69.7\% \pm 0.6$ & $69.3\% \pm 0.6$ \\
	\midrule
		InP-NCF (R) & $\mathbf{69.8\% \pm 0.5}$ & $\mathbf{70.0\% \pm 0.6}$ & $\mathbf{70.0\% \pm 0.6}$\\
	\midrule
	NCF (I) & $69.4\% \pm 0.4$ & $69.5\% \pm 0.7$ & $69.7\% \pm 0.5$\\
	\midrule
	InP-NCF (I) & $\mathbf{70.0\% \pm 0.3}$ & $\mathbf{70.2\% \pm 0.3}$ & $\mathbf{69.8\% \pm 0.5}$ \\
	\bottomrule
	\vspace{2ex}
	\end{tabular}
	\begin{tabular}{cccc}
	\multicolumn{4}{c}{\dotfill ml1m  \dotfill} \\
	\toprule
	\# layers & $\bg^{u},\bg^{i},f=[1,1,2]$ &$\bg^{u},\bg^{i},f =[2,2,3]$ & $\bg^{u},\bg^{i},f =[3,3,4]$\\
	& \multicolumn{3}{c}{\dotfill RMSE  \dotfill} \\
	\midrule
	 NCF (R) & $0.858  \pm  0.000$ & $0.857  \pm  0.002$ & $0.856  \pm  0.000$\\
	 	\midrule
	 InP-NCF (R) & $\mathbf{0.845  \pm  0.001}$ & $\mathbf{0.843  \pm  0.000}$ & $\mathbf{0.845 \pm  0.002}$\\
	 \midrule
	 NCF (I) & $0.860  \pm  0.000$ & $0.859  \pm  0.001$ & $0.854  \pm  0.002$\\
	 \midrule
	 InP-NCF (I)& $\mathbf{0.846 \pm 0.000}$ & $\mathbf{0.846 \pm 0.000}$ & $\mathbf{0.844 \pm 0.001}$\\
	 \bottomrule
	 \\
	 & \multicolumn{3}{c}{\dotfill precision  \dotfill} \\
	 \midrule
	NCF (R) &  $69.4\% \pm 0.3$ & $69.7\% \pm 0.2$ & $70.0\% \pm 0.3$ \\
	\midrule
	InP-NCF (R) & $\mathbf{70.3\% \pm 0.09}$ & $\mathbf{70.3\% \pm 0.21}$ & $\mathbf{70.4\% \pm 0.16}$\\
	\midrule
	NCF (I) & $69.3\% \pm 0.3$ & $69.7\% \pm 0.1$ & $69.8\% \pm 0.3$ \\
	\midrule
	InP-NCF (I) & $\mathbf{69.9\% \pm 0.19}$ & $\mathbf{70.0\% \pm 0.15}$ & $\mathbf{70.2\% \pm 0.2}$\\
	\bottomrule
	\end{tabular}
\end{table*}

\begin{table}
\caption{Same as Table~\ref{t:inplearn}, but for Ichiba dataset.}
\label{t:inplearn1}
\centering
	\begin{tabular}{cccc}
		\multicolumn{4}{c}{\dotfill Ichiba  \dotfill} \\
	\toprule
	\# layers & $\bg^{u},\bg^{i},f =[1,1,2]$ &$\bg^{u},\bg^{i},f =[2,2,3]$ & $\bg^{u},\bg^{i},f =[3,3,4]$\\
	& \multicolumn{3}{c}{\dotfill RMSE  \dotfill} \\
	\midrule
	 NCF (R) & $0.883  \pm  0.003$ & $0.885  \pm  0.001$ & $0.884  \pm  0.004$\\
	 \midrule
	 	InP-NCF (R) & $\mathbf{0.871  \pm  0.004}$ & $\mathbf{0.874  \pm  0.003}$ & $\mathbf{0.873 \pm  0.002}$\\
	 \midrule
	 NCF (I) & $0.875  \pm  0.003$ & $0.875  \pm  0.001$ & $0.874  \pm  0.003$\\
	 \midrule
	 InP-NCF(I)& $\mathbf{0.867 \pm 0.003}$ & $\mathbf{0.867 \pm 0.004}$ & $\mathbf{0.863 \pm 0.003}$\\
	  \bottomrule
	\\
	 & \multicolumn{3}{c}{\dotfill precision  \dotfill} \\
	 \midrule
	NCF (R) &  $79.5\% \pm 0.3$ & $79.2\% \pm 0.0$ & $79.4\% \pm 0.3$\\
		\midrule
	InP-NCF (R) & $\mathbf{79.8\% \pm 0.3}$ & $\mathbf{79.6\% \pm 0.3}$ & $\mathbf{79.6\% \pm 0.03}$\\
	\midrule
	NCF (I) & $79.5\% \pm 0.2$ & $79.6\% \pm 0.2$ & $79.6\% \pm 0.2$\\
	\midrule
	InP-NCF (I) & $\mathbf{79.7\% \pm 0.2}$ & $\mathbf{80.0\% \pm 0.5}$ & $\mathbf{80.0\% \pm 0.3}$\\
	\bottomrule
	\end{tabular}
\end{table}

\begin{table*}[t]
\caption{Implementation details: optimization method (optimizer), learning rate (lr), regularization (regu.), activation function (activ.).}
\label{t:imp}
\begin{center}
\begin{tabular}[c]{ccccc@{\hspace{4ex}}cccc} 
\toprule
& \multicolumn{4}{c}{\dotfill ml100k \dotfill}  & \multicolumn{4}{c}{\dotfill ml1m \dotfill} \\
  & optimizer & lr  & regu.\ & activ.\  & optimizer & lr  & regu.\ & activ.\ \\
\midrule
InP-NCF  & RMSprop & $0.001$ & $0$ & selu & SGD & $0.0005$ & $0$ & selu \\
\midrule
DeepCF  & RMSprop & $0.001$ & $0$ & relu & SGD & $0.0005$ & $0$ & relu \\
\midrule
aSDAE &  SGD & $0.1$ & $10^{-5}$ & tanh &  RMSprop & $0.001$ & $10^{-5}$ & tanh  \\
\midrule
DHA &  SGD & $0.1$ & $10^{-5}$ & tanh   &  RMSprop & $0.001$ & $10^{-5}$ & tanh  \\
\midrule
Autorec & Adam & $0.001$ & $0.005$ & tanh & Adam & $0.001$ & $0.001$ & tanh  \\
\midrule
NeuMF & RMSprop & $0.001$ & $0.01$ & selu & RMSprop & $0.001$ & $0.01$ & selu \\
\midrule
DSSM & RMSprop & $0.001$ & $0.001$ & selu & SGD & $0.0005$ & $0$ & selu  \\
\midrule
MF & SGD & $0.1$ & $0.0001$ & NA & SGD & $0.1$ & $0.0001$ & NA  \\
\bottomrule
\end{tabular}
\begin{tabular}[c]{ccccc@{\hspace{4ex}}cccc} 
\toprule
& \multicolumn{4}{c}{\dotfill Amazon \dotfill}  & \multicolumn{4}{c}{\dotfill Ichiba \dotfill} \\
  & optimizer & lr  & regu.\ & activ.\  & optimizer & lr  & regu.\ & activ.\ \\
\midrule
InP-NCF & RMSprop & $0.001$ & $0$ & selu & SGD & $0.0005$ & $0$ & selu \\
\midrule
DeepCF & RMSprop & $0.001$ & $0$ & relu & SGD & $0.0005$ & $0$ & relu \\
\midrule
aSDAE &  OM & OM & OM & OM &  OM & OM & OM & OM  \\
\midrule
DHA &  OM & OM & OM & OM &  OM & OM & OM & OM  \\
\midrule
Autorec & RMSprop & $0.001$ & $0$ & selu & Adam & $0.001$ & $0.001$ & tanh  \\
\midrule
NeuMF & RMSprop & $0.00001$ & $0.001$ & selu & SGD & $0.0001$ & $0.01$ & selu \\
\midrule
DSSM&  NA & NA & NA & NA & SGD & $0.0005$ & $0.001$ & selu\\
\midrule
MF & SGD & $0.1$ & $0.0001$ & NA & SGD & $0.1$ & $0.0001$ & NA  \\
\bottomrule
\end{tabular}
\end{center}
\vspace{-1ex}
\end{table*}

\subsection{Experimental settings}
We implement our method using Keras with TensorFlow 2 backend. We ran all the experiments on a $12$GB GPU. For each method, we tried a set of activation functions (relu, selu, and tanh), a range of learning rates and regularization parameters from $10^{-1}$ to $10^{-5}$, a set of optimizers (Adam, SGD, and RMSprop), and picked the one that works best. For a fair comparison, all autoencoder methods have the same structure (\# of layers, neurons, etc.). Table~\ref{t:imp} lists these details for all the methods.

In implicit feedback prediction, we use the same hyper-parameters used in DeepCF code, which is available online. In the following, we give more detailed information about the rating prediction, such as structure of the neural network and other hyper-parameters of each method. The notation $[a,b,c]$ denotes a network with three fully connected layers, where $a,b,$ and $c$ are the number of neurons in each layer.
\begin{itemize}
 \item \textbf{InP-NCF}. In ml100k, both $\bg^{u}$ and $\bg^{i}$ are MLPs with $[100,200,100]$ and $f$ is an MLP with $[500,200,100,1]$. In ml1m, both $\bg^{u}$ and $\bg^{i}$ are MLPs with $[500,300,100]$ and $f$ is an MLP with $[500,200,100,1]$. In Amazon, both $g^{u}$ and $g^{i}$ are MLPs with $[500,300,100]$ and $f$ is an MLP with $[200,200,100,1]$. In Ichiba, both $\bg^{u}$ and $\bg^{(i)}$ are MLPs with $[100,100,100]$ and $f$ is an MLP with $[100,100,100,1]$.
 
Note that we use a post-input optimization over parameters of the networks and the input. The details of the network optimization can be found in Table~\ref{t:imp}. For the post-input optimization, we use SGD with the learning rate $0.1$, $0.1$, $0.1$, and $0.5$ in ml100k, ml1m, Amazon, and Ichiba datasets, respectively.
 \item \textbf{DeepCF}. We use the same structure as InP-NCF for the MLPs in the two branches. In ml100k, ml1m, Amazon, and ichiba, the MLPs are [$500$, $200$, $100$], [$1000$, $500$, $300$, $100$], [$500$, $300$, $100$], [$200$, $100$, $100$].
 \item \textbf{DHA \cite{Li18}  and aSDAE \cite{Dong17}}. The encoders are $[500, 200, 100]$, $[1000, 500, 300, 100]$, $[500, 300, 100]$, and  $[500, 300, 100]$ in ml100k, ml1m, Amazon, and Ichiba datasets. These two methods have a large number of hyper-parameters and we tried a large range of values to get the best results. We have set $\lambda_2=\lambda_3=0.5$ and $\lambda_1=1$. We use SGD with learning rate of $0.1$ for the optimization over the variables $\bU$ and $\bV$.
\item \textbf{HIRE \cite{Liu19}}. We used the code provided by the authors without changing the hyper-parameters.
 \item \textbf{NeuMF  \cite{He17}}. This method has one deep and one shallow branches. The structure of the deep branch is the same as the encoder network of the DHA in all datasets.
 \item \textbf{Autorec  \cite{Sedhain15}} . We implemented the I-Autorec, which reconstructs the ratings of the items. This method overfits to the training data fast, even using small networks. The autoencoder is $[m,500,m]$ in ml100k, as suggested by the original paper, and $[m,500,300,100,300,500,m]$ in ml1m, Amazon, and Ichiba datasets, where $m$ is the number of users.
\item \textbf{DSSM  \cite{Huang13} } Each branch has the same structure as the encoder network of DHA.
  
\end{itemize}

\end{document}